\def\eqref#1{equation~\ref{#1}}
\def\1{\bm{1}}
\DeclareMathAlphabet{\mathsfit}{\encodingdefault}{\sfdefault}{m}{sl}
\SetMathAlphabet{\mathsfit}{bold}{\encodingdefault}{\sfdefault}{bx}{n}
\newcommand{\Var}{\mathrm{Var}}
\DeclareMathOperator*{\argmax}{arg\,max}
\DeclareMathOperator*{\argmin}{arg\,min}
\newtheorem{theorem}{Theorem}
\newtheorem{lemma}{Lemma}
\newtheorem{definition}{Definition}
\newtheorem{corollary}{Corollary}
\newtheorem{proposition}{Proposition}
\newtheorem{remark}{Remark}
\newtheorem{observation}{Observation}
\newcommand\expect[2]{\mathbb{E}_{#1}{\left[ {#2} \right]}}
\newcommand\variance[1]{\mathrm{Var}\left({#1}\right)}
\newcommand\diag[1]{\mathrm{diag}\left({#1}\right)}
\newcommand{\DistOver}[1]{{\Delta\left( {#1} \right)}}
\newcommand{\Indicator}[1]{\mathds{1}\left[{#1}\right]}
\newcommand{\Reals}{\mathbb{R}}
\newcommand{\NonNegativeReals}{\Reals_{\ge 0}}
\newcommand{\Norm}[1]{\left\lVert{#1}\right\rVert}
\newcommand{\Size}[1]{{\left|{#1}\right|}}
\newcommand{\Set}[1]{{\left\{{#1}\right\}}}
\newcommand{\ICConstraintName}{(IC)}
\newcommand{\BudgetConstraintName}{(BUDGET)}
\newcommand{\TV}[2]{\Norm{{#1}-{#2}}_\mathrm{TV}}
\newcommand{\UnitInterval}{{\left[0,1\right]}}
\newcommand{\FeasibleContracts}{{\mathcal{T}}}
\newcommand{\Prompt}{{\omega_0}}
\newcommand{\GeneratedText}{{\omega_R}}
\newcommand{\Generators}{{\mathcal{G}}}
\newcommand{\Cost}{{\alpha}}
\DeclareMathOperator{\FP}{FP}
\DeclareMathOperator{\FN}{FN}
\DeclareMathOperator{\TP}{TP}
\newcommand{\ModelName}[1]{\texttt{{#1}}}
\newcommand{\veryshortref}[2]{(\hyperref[{#2}]{{#1}\ref*{#2}})}
\newenvironment{itemizecompact}{\begin{itemize}[leftmargin=1em]}{\end{itemize}}
\newcommand{\CodeURL}{\url{https://github.com/edensaig/llm-contracts}}
\title{Incentivizing Quality Text Generation
\\
via Statistical Contracts}
\newcommand\affilnum[1]{\textsuperscript{\normalfont{#1}}}
\author{%
    Eden Saig\affilnum{1},\quad 
    Ohad Einav\affilnum{1},\quad
    Inbal Talgam-Cohen\affilnum{1,2}\\
    \affilnum{1} Technion -- Israel Institute of Technology\\
    \affilnum{2} Tel Aviv University \\
    \texttt{\{edens,ohadeinav,italgam\}@cs.technion.ac.il}
}
\begin{document}

\maketitle

\begin{abstract}
While the success of large language models (LLMs) increases demand for machine-generated text, current pay-per-token pricing schemes create a misalignment of incentives known in economics as \emph{moral hazard}: 
Text-generating agents have strong incentive to cut costs by preferring a cheaper model
over
the cutting-edge one, 
and this can be done %
``behind the scenes’’ 
since the agent performs inference internally. %
In this work, we 
approach this issue from an economic perspective, by 
proposing a pay-for-performance, contract-based framework for incentivizing quality. %
We study a principal-agent game where the agent generates text using costly inference, and the contract determines the principal's %
payment for the text according to %
an automated quality evaluation.
Since standard contract theory is inapplicable when internal inference costs are unknown, we introduce \emph{cost-robust} contracts.  %
As our main theoretical contribution, we characterize 
optimal cost-robust contracts
through a direct correspondence to optimal composite hypothesis tests from statistics, generalizing a result of Saig et al.~(NeurIPS’23).
We evaluate our framework empirically by deriving contracts for a range of objectives and 
LLM evaluation benchmarks, and find that cost-robust contracts sacrifice only a marginal increase in objective value compared to their cost-aware counterparts.

\end{abstract}

\section{Introduction}
\label{sec:intro}

Modern-day LLMs are showing increasingly impressive capabilities, and simultaneously becoming increasingly costly.
With rising success at handling complex tasks, conversational AI systems are seeing ubiquitous usage across critical domains such as  healthcare \citep{he2023survey,singhal2023large}, financial risk assessment \citep{li2023large}, and law \citep{lai2023large, sun2023short}.
To achieve such levels of performance, contemporary LLM architectures contain billions and even trillions of parameters, leading to a computational pipeline that requires dedicated facilities and substantial energy to operate \citep{samsi2023words}. 

Due to the high computational requirements of modern LLMs, language generation tasks are typically outsourced to commercial firms which generate text for a fee. 
These firms either maintain dedicated infrastructure optimized for inference workloads,
or act as intermediaries that facilitate access to such resources.
To address the tension between performance and computational costs, such firms typically have multiple service options, each offering a different trade-off between model quality and cost \citep{achiam2023gpt, brown2020language, touvron2023llama1, touvron2023llama2}.
Currently, the most common pricing scheme for such services is \emph{pay-per-token}, in which users agree in advance to pay a fixed rate for each token of text generated by the system \cite{chen2023frugalgpt}.

While simple and intuitive, the pay-per-token pricing scheme creates a misalignment of economic incentives between the firms and their consumers, known in the economic literature as \emph{moral hazard}: As inference is performed internally and a fixed price is agreed upon in advance, 
firms can strategically increase their profit margin by generating text using a cheaper, lower-quality model. 
Due to the stochastic nature of language generation, consumers may not be able to reliably determine the quality of the model being used, exposing them to this kind of hazard.

Moral hazard is especially prevalent in cases where the text generation task is complex, and so evaluation is hard: Consider a scenario where a healthcare provider hires a firm to use conversational AI for summarizing medical notes. 
As medical diagnosis is an intricate and critical task, 
the healthcare provider 
wishes the medical summaries to be generated using the most advanced language model.
Under the pay-per-token pricing scheme, the healthcare provider agrees in advance to pay the firm
a fixed amount for each token generated. However, it is not hard to imagine that the firm may attempt to increase profit margins by routing some of the summarization requests to cheaper language models, 
instead of the most advanced one, 
without taking into account their purpose,
and knowing that any lower-quality results would be attributed to the stochastic nature of LLM inference.

\paragraph{From pay-per-token to pay-for-performance.}

In the economic literature, the canonical solution to moral hazard problems is \emph{pay-for-performance}, or \emph{P4P} \cite{greene2009pay}. Instead of paying a fixed price for any outcome, the parties agree in advance on a \emph{contract} that specifies a differential payment scheme -- for example, agreeing in advance that the firm will receive higher pay when the generated text is considered to be of higher quality.
When designed correctly, contracts incentivize rational agents to invest more effort, thus providing a way to align incentives. Interaction around contracts is modeled as a principal-agent game, where the principal commits to a payment scheme, and the agent responds by rationally selecting a utility-maximizing action. Within this framework, the principal seeks to design a contract which satisfies some notion of optimality, such as requiring the least amount of expected pay (``min-pay contract''), or the lowest budget (``min-budget contract'').

In this work, we extend the theory of contract design, and use it to design optimal pay-for-performance pricing schemes for delegated text generation. 
Applying contract design to this setting requires us to overcome the challenges of \emph{automated evaluation} and \emph{cost uncertainty}.
The former stems from the need for a scalable measure of performance to support pay-for-performance pricing, 
while the latter arises from the principal's uncertainty about the agent’s true internal cost structure, as commercial firms often regard operational costs and implementation details as proprietary information.

\paragraph{Our results.}
To tackle automated evaluation, we draw upon recent advances in the LLM evaluation literature \citep{chang2023survey}, and propose a modular contract design framework which uses LLM evaluators as subroutines. More specifically, upon receiving generated text, our pricing scheme is implemented by evaluating the prompt-response pair using an automated evaluator and paying accordingly.
The choice of evaluator can be tailored to the task:  optimal pricing schemes in code generation tasks, for example, would rely on a pass/fail code evaluator \citep{chen2021evaluating, austin2021program}, whereas evaluation of %
linguistic tasks can be achieved using an ``LLM-as-a-judge'' approach \citep{zheng2024judging, liu2023gpteval, li2023generative}. 
In our theoretical analysis, we show that 
our %
framework is applicable even to intricate
tasks where current evaluation methods are noisy and undecisive, as the principal can compensate for the noise by paying more (\Cref{pro:optimal_truncated_cost_contract_linear_combination}).

To address the challenge of cost uncertainty, we propose a new notion of \emph{cost-robust contracts}, which are pay-for-performance schemes  guaranteed to incentivize effort even when the internal cost structure is uncertain. 
Our main theoretical contribution is a statistical characterization of optimal cost-robust contracts (\Cref{thm:main}): We prove a direct correspondence between optimal cost-robust contracts and statistical hypothesis tests
by showing that the min-budget and min-pay contract objectives correspond to minimax risk functions of composite hypothesis tests (Type-1+Type-2 errors and FP/TP, respectively).
This significantly generalizes a recent result by \citet{saig2023delegated} to arbitrary action spaces and multiple optimality objectives. 
The statistical connection provides intuition and interpretation for numerical results, and the applicability to multiple objectives allows system designers to accommodate different business requirements.
Intriguingly, the relation between the optimal contract and the optimal statistical risk have the same functional form in both objectives (min-budget and min-pay). Moreover, multiplying optimal hypothesis tests by a constant whose value depends only on the statistical risk yields \emph{approximately}-optimal contracts (\Cref{thm:minimax_min_budget_approximation}).

Finally, we evaluate the empirical performance of cost-robust contracts by analyzing LLM evaluation benchmarks for two families of tasks. In the first experiment, we compare the performance of two-outcome contracts across code generation tasks with varying difficulty; results show that what determines the pricing scheme is the relative success rates of the models, not the task difficulty. In the second experiment, we compute multi-outcome contracts for an intricate conversational task evaluated via LLM-as-a-judge. Numerical results show that the optimal monotone cost-robust pricing scheme has an intuitive 3-level structure: pay nothing if the quality is poor, pay extra if it is exceptional, and pay a fixed baseline otherwise. 
We show our framework's flexibility by providing a comprehensive comparison across various contract objectives and simplicity constraints.

\subsection{Related work}
\label{sec:related}

Our main technical tool is algorithmic contract design (see \cite{BabaioffFNW12,HoSlivkinsWortman2014Adaptive,dutting2019simple} and subsequent works). 
Many works in this area address distributional robustness, e.g.  \citep{carroll2015robustness}%
, \cite{dutting2019simple} which also studies approximation guarantees of simple contracts, 
and the recent \citep{ananthakrishnan2024delegating} which presents a distributionally-robust contract design approach for delegation of learning. However, to our knowledge, none address cost-robustness.
Connections between contract design and statistics have long been known to exist at a high level (see, e.g., \cite{Salanie17}), and were recently explored by \citep{bates2022principal} in the context of adverse selection, and \citep{saig2023delegated} for two-action min-budget contract. From a technical standpoint, our work is closest to \citep{saig2023delegated}, which only proves the statistical connection for the special case of two-action min-budget contracts.
Finally, we note that our cost-robustness framework is general, and our characterization results may be of independent interest.
Additional related work appears in~\Cref{appx:related}.

\section{Problem Setting: Contract Design for Text Generation}
\label{sec:setting}

\begin{figure}
    \centering
    \includegraphics[width=\columnwidth]{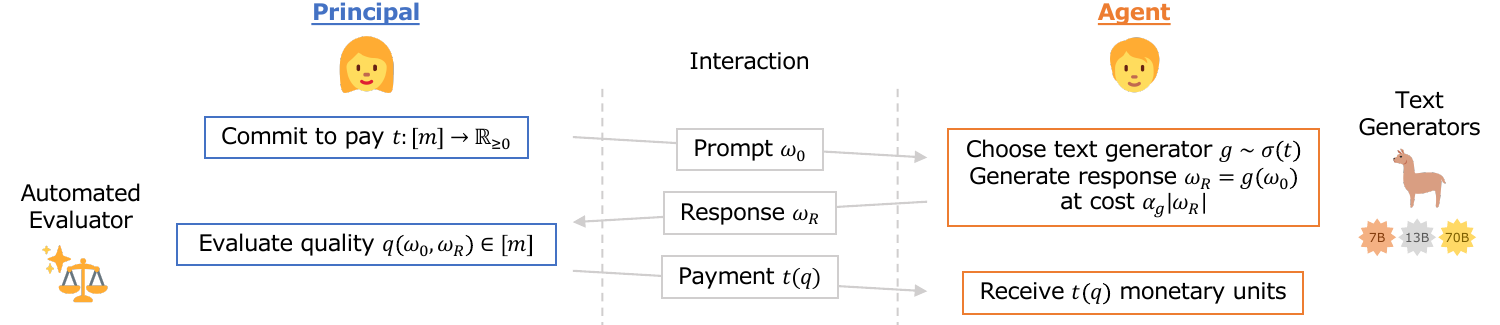}
    \vspace{1pt}
    \caption{
    Interaction protocol.
    Principal commits to pay $t(q)$ monetary units according to response quality, and sends prompt $\Prompt$;
    Agent selects text generator $g\sim\sigma(t)$, and generates response $\GeneratedText=g(\Prompt)$ at cost $\Cost_g \Size{\GeneratedText}$;
    Principal evaluates response quality $q(\Prompt,\GeneratedText)$, and pays accordingly.
    }
    \label{fig:interaction_protocol}
\end{figure}

We study the delegation of a text generation task from a strategic \emph{principal} to \emph{agent}, with a payment scheme designed to incentivize quality. Here we 
formulate 
the problem 
as 
a \emph{contract design} instance.
 
\subsection{Quality text generation (agent's perspective)} %
\label{sub:prelim-text}

The core of our setting is a standard language generation task. 
Let $V$ be a vocabulary of tokens, and denote the set of all token sequences by $V^*
$.
A \emph{text generator} $g:V^*\to V^*$ is a mapping from a textual prompt $\Prompt$ to a response $\GeneratedText$.
We assume that prompts
are sampled from a distribution $\Prompt \sim D\in\DistOver{V^*}$, 
and denote by $D_g$ the distribution of prompt-response pairs, where the prompt is sampled from $D$ and the response is generated by generator~$g$. %
Given a prompt and generated response,
a \emph{quality evaluator} is a function $q:V^*\times V^* \to [m]$ which scores the response
on a scale of ${1,\ldots,m}$.
We use $F_g$ to denote 
the distribution over scores $[m]$ induced by applying the quality evaluator to a random pair $(\Prompt,\GeneratedText)\sim D_g$, and $F_{gj}$ to denote the probability of score $j\in[m]$. %

The agent has access to a collection of possible text generators $\Generators=\Set{g_1,\dots,g_n}$, which we also refer to for convenience by their indices $[n]$. Each model $g_i\in\Generators$ is associated with a model-dependent cost $\Cost_i \ge 0$, which is the average cost (borne by the agent) of generating a single token from $g_i$. 
For convenience we write $D_i=D_{g_i}$ and $F_i=F_{g_i}$.  
Denote by $c_i=\alpha_i \expect{(\Prompt,\GeneratedText)\sim D_i}{\Size{\GeneratedText}}$ the expected cost of using the $i$th generator.
We assume w.l.o.g.~that the costs are non-decreasing, i.e., %
$c_1\le\dots\le c_n$, %
and that they reflect the inherent quality of the models.
In contract design terminology, the generators are the agent's possible \emph{actions}.
The agent can choose a single (\emph{pure}) action, or a distribution over text generators $\sigma\in\DistOver{\Generators}$ known in game theory as a \emph{mixed} action.%
\footnote{
For example, the agent can generate responses using a larger model for 95\% of requests, and apply the smaller model for the remaining 5\%, corresponding to the mixed action $\sigma=(0.05,0.95)$.}
The cost $c_1$ of the lowest-cost action is the agent's ``opportunity cost'',
and unless stated otherwise $c_1=0$.%
\footnote{Choosing the first action can be thought of as opting out of the task at cost $c_1$. If $c_1=0$ then the agent participates in the contract only if the expected utility is non-negative -- a property known as \emph{individual rationality}.}

\paragraph{As an abstract contract design problem.}
The above setting is precisely a contract design setting with $n$~actions and $m$ outcomes \citep{holmstrom1979moral}. Such a setting is defined by the pair $(F,c)$, where $F$ is an $n\times m$ matrix with distribution $F_{i}$ as its $i$th row for every $i$ (known as the \emph{distribution matrix}), and where $c$ is a vector of costs. For every action~$i$, $F_i$ and $c_i$ are the outcome distribution and cost, respectively.

\paragraph{Pay-for-performance and agent's utility.} %
To incentivize high quality text generation, the principal commits in advance to a pay-for-performance contract, which specifies the amount of payment to the agent for generating a response with a certain quality.
More formally, given a quality evaluator $q$ with an output scale $1,\ldots,m$, a \emph{contract} $t:[m]\to\NonNegativeReals$ is a mapping from the estimated quality to the size of monetary transfer. Note that transfers are non-negative; this standard restriction is known as \emph{limited liability}, and it mirrors the fact that when a principal hires an agent to perform a task, money flows in one way only (from principal to agent, and not vice versa). If transfers are increasing with score, we say $t$ is a \emph{monotone} contract. 
Monotonicity is not without loss of generality, but is a desirable property as monotone contracts are generally simpler and easier to explain \citep{dutting2019simple}. 

Given a contract $t\in \NonNegativeReals^m$ and an action $\sigma\in\DistOver{\Generators}$, the agent's expected utility $u_A(t; \sigma)$ (a.k.a.~the agent's profit) is the difference between the expected reward and the expected cost of text generation:
\begin{equation*}
u_A(t; \sigma)
=
\expect{g_i\sim \sigma; (\Prompt,\GeneratedText)\sim D_i}{
t(q(\Prompt,\GeneratedText))
-\Cost_i \Size{\GeneratedText}
} =
\expect{g_i\sim \sigma; j\sim F_i}{
t(j)
-c_i
},
\end{equation*}
where $(\Prompt,\GeneratedText)\sim D_i$ are the prompt and generated response, $t(q(\Prompt,\GeneratedText))$ is the payment transferred to the agent based on the quality of response, and $\Cost_i\Size{\GeneratedText}$ is the agent's cost of generating the response.
We assume the agent is rational and therefore selects, when facing contract $t$, an action $\sigma(t)$ which maximizes their expected profit (also known as the agent's \emph{best response}):
\begin{equation*}
\sigma(t)
\in \argmax_{\sigma \in \DistOver{\Generators}} 
u_A(t; \sigma).
\end{equation*}
As is standard in contract theory, we assume the agent breaks ties 
consistently and in a way that agrees with the principal's preferences.%
\footnote{In our context, this means that if action $g_n$ is a best response for the agent, then the agent will choose $\sigma(t)$ that plays $g_n$ with probability 1 (see Section~\ref{sub:prelim-contracts}).}
The interaction model is summarized in \Cref{fig:interaction_protocol}. 

\subsection{Designing the contract (principal's perspective)}
\label{sub:prelim-contracts}

We assume that the principal seeks to obtain text generated by the model $g_n\in\Generators$, the most advanced model with the (strictly) highest associated cost $c_n>c_{n-1}$. We refer to $g_n$ as the \emph{target action}, i.e.~the action which the principal wishes to incentivize.
Taking the role of the principal, our goal is to design the ``best'' contract $t^*$ that incentivizes the agent to generate responses using the target model $g_n$.
This is formalized by the following optimization problem:
\begin{equation}
\label{eq:optimal_contract}
t^* = \argmin_{t\in \NonNegativeReals^m}
\Norm{t}
\quad\mathrm{s.t.}\quad
\sigma(t)=\delta_{g_n}%
,
\end{equation}
where $\Norm{t}$ 
is a norm of $t$ 
representing the principal's economic objective (see below), and $\delta_{g_n}$ is a point-mass distribution over text generators, supported by the target generator $g_n$. 
We denote 
the set of contracts incentivizing action $g_n$
by
$\FeasibleContracts(g_n)
=\Set{t\in\NonNegativeReals^m\mid \sigma(t)=\delta_{g_n}}$, 
and further note that the assumption of a single target action serves as a foundational step towards more complex contract design scenarios (see \Cref{subsec:target_models_above_quality_threshold}).

\paragraph{Information structure (who knows what).} The agent's available actions $\mathcal{G}$ and the possible scores $[m]$ are known to both players. As the quality distributions $F_i$ can be learned from past data, we assume they are known to both principal and agent. As the costs of inference $\{\Cost_i\}$ depend on
internal
implementation details, we assume the costs are known to the agent but uncertain to the principal.
We thus aim for a contract optimization framework which maximizes different types of objectives, and allows for optimization of $t$ even when the costs incurred by the agent are uncertain to the principal.

\paragraph{Objectives: min-budget, min-pay and min-variance contracts.}
In \cref{eq:optimal_contract},
different norms $\Norm{t}$ correspond to different possible optimization goals %
of the principal:
For example, 
a contract is \emph{min\nobreakdash-pay} 
if it incentivizes the target action using minimum total expected payment 
$\expect{j\sim F_n}{t(j)}$ among all contracts in $\FeasibleContracts(g_n)$ \citep{dutting2019simple}; In \cref{eq:optimal_contract}, this
corresponds to the $\ell_1$ norm weighted by the quality distribution of the target action. 
Similarly, 
a contract is \emph{min-budget} 
if it incentivizes the target action using minimum budget $B_t=\max_j t(j)$ \citep{saig2023delegated}; In \cref{eq:optimal_contract}, this corresponds to the $\ell_\infty$ norm.
Additionally, we also consider a natural \emph{min-variance} objective, which was not previously studied to our knowledge. A min-variance contract minimizes the objective $\Var(t)$, corresponding to a weighted $\ell_2$ norm.
Optimal contracts for these objectives can be computed in polynomial time by solving a corresponding linear or convex-quadratic program (see \Cref{subsec:linear_programs_and_equivalent_forms}).
We also consider approximately-optimal contracts:

\begin{definition}[$\eta$-optimal contract]
\label{def:approximately-optimal-contract}
Let $\eta \ge 1$.
For contract setting $(F,c)$, let $t^*\in\FeasibleContracts(g_n)$ be the optimal contract with respect to objective $\Norm{t}$. A contract $t\in\FeasibleContracts(g_n)$ is \emph{$\eta$-optimal} if $\Norm{t}\le \eta \Norm{t^*}$.
\end{definition}

\section{Hypothesis Testing and Contracts}
\label{sub:prelim-hypothesis}

This section sets the stage for connecting cost-robust contracts to statistical tests in Section~\ref{sec:contracts}.

\subsection{Preliminaries}

\paragraph{Simple hypothesis tests}
Consider two distributions $F_0, F_1 \in \DistOver{[m]}$. Given $j\in[m]$ which is sampled from either $F_0$ or $F_1$, a \emph{hypothesis test} is a function $\psi:[m]\to[0,1]$ 
which outputs $1$ if $j$ is likely to have been sampled from $F_1$, and $0$ otherwise\footnote{When $\psi(j)$ is fractional, we consider the output of the test to be $1$ with probability $\psi(j)$, and $0$ otherwise.}. 
In the hypothesis testing literature, $F_0$ is a \emph{simple null hypothesis}, and $F_1$ is a \emph{simple alternative hypothesis}.
Performance measures of hypothesis tests are derived from the probabilities of making different types of errors:
For a test $\psi$, the probability of false positives $\FP=\sum_{j=1}^m F_{0,j} \psi_j$ measures the rate of type-1 errors;
This is when the test rejects the null hypothesis despite the sample being drawn from $F_0$. Similarly, the probability of false negatives $\FN=\sum_{j=1}^m F_{1,j} (1-\psi_j)$ measures the rate of type-2 errors, i.e. when the test does not reject the null hypothesis despite the sample being drawn from $F_1$.
We also denote the true positives by $\TP=\sum_{j=1}^m F_{1,j} \psi_j$; $\TP$ is also known as the test's power, and equal to $1-\FN$.

\paragraph{Composite hypothesis tests.}
Consider now two \emph{sets} of distributions $\Set{F_k}_{k=1}^{n-1}$, $\Set{F_n}$, where $F_i\in\DistOver{[m]}$ for all $i\in[n]$. In hypothesis testing terms, $\Set{F_k}_{k=1}^{n-1}$ is a \emph{composite null hypothesis}. 
$\Set{F_n}$ is a simple alternative hypothesis as before, and a composite hypothesis test~$\psi$ outputs $1$
if a given $j\in[m]$ is likely to have been sampled from $F_n$. 
To define performance in the composite case,
we denote by $\FP_k=\sum_{j=1}^m F_{k,j} \psi_j$ the standard type-1 error between distributions $F_k$ and $F_n$. As the alternative hypothesis is still simple, definitions of $\FN$ and $\TP$ remain as before, using $F_n$ as the reference distribution.
To measure the performance of hypothesis tests, it is common to take a \emph{worst-case} approach, and define the composite $\FP$ as the standard type-1 error $\FP_k$ against the worst-case distribution $F_k$ in the null hypothesis set.

\subsection{Risk and minimax tests}
To formalize the notion of worst-case error,
let $\psi$ be a composite hypothesis test for $\Set{F_k}_{k=1}^{n-1}$, $\Set{F_n}$.
For any $k\in[n-1]$, define a \emph{risk} function $r_k:[0,1]^m\to\NonNegativeReals$ to be a mapping from $\psi$ to a risk score, treating $\psi$ as a simple hypothesis test between distributions $F_k$ and $F_n$. 
A natural way of measuring risk is by combining the test's two error types. %
One measure is the \emph{sum} of errors, denoted by $R_k(\psi):=\FP_k+\FN$. A classic result by Neyman and Pearson shows that $R_k(\psi)$ is minimized by the \emph{likelihood-ratio test} for any fixed $k$~\cite{rigollet2015high}.
Another measure is the \emph{ratio} of false positives to true positives, denoted by $\rho_k(\psi) := \FP_k/\TP$. 
To generalize a risk measure $r_k$ to a composite hypothesis test, we adopt the worst-case approach and define
$r(\psi):=\max_{k\in[n-1]}\{r_k(\psi)\}$.
Thus,
$R(\psi)=\max_{k\in[n-1]}\{R_k\}=\FP+\FN$, and $\rho(\psi)=\max_{k\in[n-1]}\{\rho_k\}=\FP/\TP$. %

\begin{definition}[Minimax hypothesis test]
\label{def:minimal_ratio_hypothesis_test}
Let $\psi^*_F$ be a composite hypothesis test for $\Set{F_k}_{k=1}^{n-1}$, $\Set{F_n}$, and fix a risk function $r_k$.
The test  $\psi^*_F$ is \emph{minimax optimal w.r.t.~$r$} 
if it minimizes the worst-case risk:
$$
\psi^*_F = \argmin_{\psi \in [0,1]^m} \max_{k\in[n-1]} \{r_k(\psi)\}=\argmin_{\psi \in [0,1]^m} \{r(\psi)\}.
$$
The \emph{minimax sum-optimal test} and \emph{minimax ratio-optimal test} are the minimax optimal tests with respect to the sum $R$ and the ratio $\rho$, respectively.
\end{definition}
Observe that the optimal risk of both types of tests is bounded by $r(\psi)\le1$, as the constant test $\psi_j=0.5$
satisfies $R(\psi)=\rho(\psi)=1$.
We assume at least a small difference between the hypotheses, such that $r(\psi)<1$.
This allows us to define contracts based on these tests.

\subsection{From tests to contracts and back}
\label{sub:tests2contracts}

We derive ``statistical'' contracts from hypothesis tests by multiplying them by a function of the risk, %
and derive ``contractual'' tests from contracts by normalizing them: Consider a contract setting $(F,c)$, with either known costs and $b:=c_n-c_1$, or a cost upper bound $b\ge c_n-c_1$. 
Fix a risk function $r$ 
and a corresponding budget function $B_r(\psi, b)\in \Reals$. %
\begin{itemizecompact}
    \item \textbf{Test-to-contract}: Let $\psi$ be a test for sets $\Set{F_k}_{k=1}^{n-1}$, $\Set{F_n}$ with budget $B_r(\psi, b) \in [0,\infty)$. The corresponding \emph{statistical contract} $t^{(r,\psi)}$ 
    is $B_{r}(\psi, b)\cdot \psi$. %
    
    \item \textbf{Contract-to-test}: Let $t$ be a contract.
    The corresponding \emph{contractual test} $\psi^t$ is $t/\Norm{t}_\infty$.  
\end{itemizecompact}
We are interested in the following statistical contracts corresponding to the tests from Definition~\ref{def:minimal_ratio_hypothesis_test}:

\begin{definition}%
\label{def:statistical_contract}
Consider a contract setting $(F,c)$, with 
either known costs and $b:=c_n-c_1$, or 
a cost upper bound $b\ge c_n-c_1$. 
The \emph{sum-optimal statistical contract} $B^*_R(b)\cdot \psi^*_R$ %
is obtained from the minimax sum-optimal test $\psi^*_R$ multiplied by
$B^*_R(b):=\frac{b}{1-R(\psi^*_R)}$. 
The \emph{ratio-optimal statistical contract} $B^*_{\rho}(b)\cdot \psi^*_{\rho}$ %
is obtained from the minimax ratio-optimal test $\psi^*_\rho$ multiplied by $B^*_\rho(b):=\frac{b}{\TP(\psi^*_{\rho})-\FP(\psi^*_{\rho})}$.
\end{definition}

\section{Cost-Robust Contracts}
\label{sec:contracts}

In this section we state and prove our main result -- a direct connection between composite hypothesis testing and cost-robust contracts.
Consider a contract design setting $(F,c)$ with increasing costs $c_1\le \dots \le c_{n-1} < c_n$, where $n$ is the target action. In real-world settings, the principal may not have full knowledge of the agent's internal cost structure. We model this by assuming the principal is oblivious to the precise costs, but knows an upper bound $b\ge c_n-c_1$. %
We are interested in \emph{robust} contracts that incentivize the target action for \emph{any} 
cost vector compatible with the upper bound:

\begin{definition}[Cost-robust contracts]
\label{def:cost-robust}
Consider a distribution matrix $F$ and a bound $b>0$ on the costs. %
Let $\mathcal{C}_b$ be an ambiguity set of all increasing cost vectors $c$ such that $c_n - c_1 \le b$.
A contract is \emph{$b$-cost-robust}
if it implements action $n$ for any cost vector $c\in \mathcal{C}_b$. 
\end{definition}

Informally, our main theoretical result shows that optimal cost-robust contracts are optimal hypothesis tests up to scaling, where the scaler depends on the risk measure which the test optimizes. 
Our approach can be applied to several notions of optimality, and each optimality criterion for contracts corresponds to a different optimality criterion for hypothesis tests. Specifically, we derive the correspondence for min-budget and min-pay optimality of contracts. Formally (recall Definition~\ref{def:statistical_contract}):

\begin{theorem}[Optimal cost-robust contracts]
\label{thm:main}
For every contract setting with distribution matrix $F$
and an upper bound $b$ on the (unknown) costs, let $\psi^*_{R}$ (resp., $\psi^*_{\rho}$) be the minimax sum-optimal (ratio-optimal) test with risk $R^*$ ($\rho^*)$ among all composite hypothesis tests for $\Set{F_k}_{k=1}^{n-1}$, $\Set{F_n}$. Then: 
\begin{itemizecompact}
    \item The \emph{sum}-optimal statistical contract $B^*_{R}(b)\cdot \psi^*_{R}$ is $b$-cost-robust with \emph{budget} $b/(1-R^*)$, and has the lowest budget among all $b$-cost-robust contracts.
    \item The \emph{ratio}-optimal statistical contract $B^*_{\rho}(b)\cdot \psi^*_{\rho}$ is $b$-cost-robust with \emph{expected total payment} $b/(1-\rho^*)$, and has the lowest expected total payment among all $b$-cost-robust contracts.
\end{itemizecompact}
\end{theorem}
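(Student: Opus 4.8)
The plan is to characterize $b$-cost-robustness as a clean feasibility condition on the contract, then translate the two optimality objectives into the corresponding minimax risk problems via the test-to-contract and contract-to-test maps of Section~\ref{sub:tests2contracts}. First I would unpack the incentive constraint. A contract $t$ implements action $n$ against cost vector $c$ iff $\expect{j\sim F_n}{t(j)} - c_n \ge \expect{j\sim F_k}{t(j)} - c_k$ for every $k\in[n-1]$ (and limited liability $t\ge 0$, and the tie-breaking convention to get the point mass $\delta_{g_n}$). Rearranging, this is $\sum_j (F_{n,j}-F_{k,j}) t(j) \ge c_n - c_k$. Since $c$ ranges over all increasing vectors in $\mathcal{C}_b$, the hardest cost vector for constraint $k$ is the one maximizing $c_n - c_k$; because costs are non-decreasing and $c_n - c_1 \le b$, the worst case is $c_n - c_k = b$ simultaneously achievable for the binding constraint (take $c_1 = \dots = c_{n-1} = 0$, $c_n = b$). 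Hence $t$ is $b$-cost-robust iff $\sum_j (F_{n,j}-F_{k,j}) t(j) \ge b$ for all $k\in[n-1]$, i.e. $\TP_k(t) - \FP_k(t)$-type expressions — more precisely $\expect{F_n}{t} - \expect{F_k}{t} \ge b$ for all $k$ — together with $t\ge 0$. This is the key reduction and I expect it to be the main obstacle: one must argue carefully that the supremum over the ambiguity set is attained and that the monotonicity constraint on $c$ does not weaken the worst case, handling the strict inequality $c_{n-1}<c_n$ and the tie-breaking assumption so that the agent actually plays $g_n$ with probability $1$ rather than merely being indifferent (a standard but delicate point, resolved as in \citet{dutting2019simple,saig2023delegated} by an arbitrarily small perturbation argument or by the stated tie-breaking convention).

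Given the feasibility set $\FeasibleContracts_b = \{t\ge 0 : \expect{F_n}{t}-\expect{F_k}{t}\ge b \ \forall k\}$, the min-budget claim follows by writing $t = B\psi$ with $B = \Norm{t}_\infty$ and $\psi = t/\Norm{t}_\infty \in [0,1]^m$. The constraint becomes $B\,(\TP(\psi) - \FP_k(\psi)) \ge b$ for all $k$, i.e. $B \ge b / (\TP(\psi) - \FP(\psi)) = b/(1 - \FN(\psi) - \FP(\psi)) = b/(1-R(\psi))$, using $\TP = 1 - \FN$ and the worst-case definition $\FP = \max_k \FP_k$, $R(\psi) = \FP + \FN$. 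So minimizing the budget over all feasible $t$ is exactly minimizing $b/(1-R(\psi))$ over all tests $\psi\in[0,1]^m$ with $R(\psi)<1$, which is minimized by the minimax sum-optimal test $\psi^*_R$, giving optimal budget $b/(1-R^*)$ and optimal contract $B^*_R(b)\cdot\psi^*_R$ — precisely the sum-optimal statistical contract of Definition~\ref{def:statistical_contract}. One should double-check the edge case where some $\TP(\psi)-\FP(\psi)\le 0$ (the contract is then infeasible for any scaling, consistent with $R(\psi)\ge 1$) and that rescaling a test up to its minimal feasible budget does not break limited liability (it does not, since scaling by a positive constant preserves $t\ge 0$).

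The min-pay claim is the same argument with a different normalization of the objective. Again write $t = B\psi$; the expected total payment is $\expect{F_n}{t} = B\cdot\TP(\psi)$. Feasibility requires $B \ge b/(\TP(\psi)-\FP(\psi))$, so the minimal expected payment for a given direction $\psi$ is $b\,\TP(\psi)/(\TP(\psi)-\FP(\psi)) = b/(1 - \FP(\psi)/\TP(\psi)) = b/(1-\rho(\psi))$, using $\rho(\psi) = \FP/\TP = \max_k \FP_k/\TP$. Minimizing over $\psi$ yields the minimax ratio-optimal test $\psi^*_\rho$, optimal expected payment $b/(1-\rho^*)$, and optimal contract $B^*_\rho(b)\cdot\psi^*_\rho$, matching the ratio-optimal statistical contract. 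Finally I would note that in both cases one can without loss restrict the outer optimization to tests achieving the minimax risk: any $\psi$ with strictly larger risk gives a strictly larger objective, and the scaling constant $B^*_r(b)$ is the exact minimal feasible budget for that $\psi$, so the constructed statistical contract is simultaneously feasible and optimal. This also makes transparent the ``same functional form'' remark: both objectives are $b/(1-r^*)$ for the respective risk $r$, differing only in whether the normalizing divisor is $\TP-\FP$ (sum) or $\TP-\FP$ rescaled by $\TP$ (ratio).
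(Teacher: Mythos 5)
Your proposal is correct and follows essentially the same route as the paper's proof: reduce $b$-cost-robustness to incentive compatibility against the single extremal cost vector $(0,\dots,0,b)$, then normalize $t=B\psi$ to translate the min-budget (resp.\ min-pay) objective into the minimax risk $R$ (resp.\ $\rho$). The paper packages the same computation into separate lemmas (the dominance Lemma~\ref{lem:dominating_cost_vector_feasible_region_inclusion} and the uniform-cost Lemmas~\ref{lem:truncated_cost_minimax_test}--\ref{lem:truncated_cost_minimax_test_ratio}, proved via explicit LP change of variables), whereas you inline them into one direct argument, but the mathematical content is the same.
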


\begin{table}
  \centering
  \begin{tabular}{cccc}
    \toprule
    Economic objective 
    & Objective function
    & Statistical objective 
    & Risk function
    \\
    \midrule
    \\[-0.8em]
    Min-budget
    & $\max_{j\in[m]} t_j$
    & $\FP + \FN$
    & $\Pr_{F_k}(\psi=1) + \Pr_{F_n}(\psi=0)$
    \\ 
    \\[-0.8em]
    Min-pay
    & $\expect{j\sim F_n}{t_j}$
    & $\FP/\TP$
    & $\frac{\Pr_{F_k}(\psi=1)}{\Pr_{F_n}(\psi=1)}$ 
    \\ 
    \\[-0.8em]
    \bottomrule
  \end{tabular}
  \caption{
  \label{table:theoretical_results}
  Correspondence between cost-robust contracts and hypothesis tests,  arising from \Cref{thm:main}. 
  \vspace{-1em}
  }

\end{table}

Table~\ref{table:theoretical_results} summarizes the contract vs.~test equivalences arising from Theorem~\ref{thm:main}. 
In the special case of contract design settings combining (i)~a binary action space ($n=2$), %
(ii)~a zero-cost action ($c_1=0$),
and 
(iii)~a tight upper bound ($b=c_2$), 
the first half of \Cref{thm:main} recovers a recently-discovered correspondence between hypothesis testing and (non-cost-robust) two-action min-budget contracts~\citep[Theorem 2]{saig2023delegated}. %
Theorem~\ref{thm:main} is more general since it applies to any number of actions as well as to the standard min-pay objective. %
Thus, Theorem~\ref{thm:main} can also be seen as extending the interpretable format of optimal contracts for binary-action settings beyond two actions. 

In \Cref{sub:lemmas}, we derive our main lemmas, which are used in \Cref{appx:proof-main-theorem} to prove \Cref{thm:main}.
Our proofs rely on two main assumptions.
The first assumption is that the cost uncertainty bounds are known; as these bounds become looser, the required budget and expected payout increase linearly.
The second assumption is that the contract design problem is implementable -- i.e., that there exists some contract incentivizing the target action. 
In particular, implementability holds when text generated by the target model has the highest quality in expectation (see \Cref{subsec:implementability_conditions}).
Generally, a contract design problem is implementable when the observed quality distribution of the target model can't be emulated by a combination of alternative models at a lower cost
(see, e.g., \citep{dutting2019simple}).

\subsection{Additional properties of optimal cost-robust contracts}
\label{sub:approx}

In this section, we focus for concreteness on min-\emph{budget} cost-robust contracts, and establish their approximation guarantees as well as their functional form under structural assumptions.
First, in analogy to \cite{dutting2019simple}, 
it is natural to examine the approximation guarantees of cost-robust contracts.
We show (recall \Cref{def:approximately-optimal-contract}): 

\begin{theorem}[Approximation guarantees]
\label{thm:minimax_min_budget_approximation}
For every contract setting $(F,c)$, let $0< a\le b$ be a lower and upper bound on the difference between the target cost and any other cost, i.e., $(c_n-c_i)\in[a,b]$ for all $i\in[n-1]$. Then the min-budget $b$-cost-robust contract for $(F,c)$ is $\frac{b}{a}$-optimal with respect to the budget objective $\Norm{t}_\infty$, and the approximation ratio $\frac{b}{a}$ is tight.
\end{theorem}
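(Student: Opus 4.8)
The plan is to prove the claim by sandwiching the $b$-cost-robust min-budget contract between two cost-aware optimal contracts. Let $t^b$ denote the optimal min-budget $b$-cost-robust contract for $(F,c)$, with budget $B^b = \|t^b\|_\infty$, and let $t^*$ denote the optimal (cost-aware) min-budget contract for the true setting $(F,c)$, with budget $B^* = \|t^*\|_\infty$. We want $B^b \le \frac{b}{a} B^*$, plus a matching lower bound instance.

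First I would establish a scaling/monotonicity principle for the statistical characterization from \Cref{thm:main}: since the sum-optimal statistical contract has budget $\beta/(1-R^*)$ where $\beta$ is the relevant cost gap and $R^*$ is a property of $F$ alone (independent of costs), the optimal min-budget contract that robustly implements action $n$ against \emph{all} cost gaps bounded by $b$ must use $\beta = b$ (the worst case), giving $B^b = b/(1-R^*)$. On the other hand, a cost-aware analysis (the two-action case is in \citep{saig2023delegated}; for $n$ actions one uses the same composite-test machinery with the \emph{true} gaps $c_n - c_i$) yields that the optimal cost-aware budget is $B^* = \max_i (c_n - c_i)/(1 - R^*_i)$ for appropriate per-action risks, and in particular $B^* \ge a/(1-R^*)$ since the largest gap is at least $a$ and the relevant minimax risk over the full null set dominates. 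Here the key observation is that the \emph{same} minimax test $\psi^*_R$ is feasible in both problems — cost-robustness only rescales it — so $B^b / B^* \le b/a$.

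The cleaner route, which I would actually carry out, avoids reproving the cost-aware characterization: take the optimal cost-robust contract $t^b = B^b \cdot \psi^*_R$ and observe that $\tfrac{a}{b} t^b = \tfrac{a}{b} B^b \cdot \psi^*_R$ is itself a valid contract; I claim it implements action $n$ under the \emph{true} costs. Indeed, by the IC analysis underlying \Cref{thm:main}, a scaled test $\gamma \psi^*_R$ implements $n$ against a cost vector iff $\gamma(1 - R^*) \ge c_n - c_i$ for every $i$ (the binding incentive constraints), and $t^b$ is calibrated so that $B^b(1-R^*) = b \ge c_n - c_i$; scaling by $a/b$ and using $c_n - c_i \ge a$ — wait, that goes the wrong way, so instead: the cost-aware optimum $t^*$ satisfies $\|t^*\|_\infty (1 - R^*) \ge \max_i(c_n - c_i) \ge a$, hence $B^* \ge a/(1-R^*)$, while $B^b = b/(1-R^*)$, giving the ratio directly. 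I would write this as a short two-inequality computation once the IC-to-risk dictionary is invoked.

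The main obstacle is the tightness claim: I need an explicit family of contract settings where $B^b / B^* = b/a$ exactly. I would construct a two-action, two-outcome instance where $F_1, F_2$ are chosen so that $R^* = R(\psi^*_R)$ takes a convenient value (e.g. tune $F$ so the minimax test is a clean threshold), set $c_1 = 0$ and $c_2 = a$ so the true gap is exactly $a$, and let the principal's bound be $b \ge a$. Then $B^* = a/(1-R^*)$ and $B^b = b/(1-R^*)$, so the ratio is exactly $b/a$. The one subtlety to check is that with the true costs the optimal contract really does achieve budget $a/(1-R^*)$ and not less — this follows because in the two-action case the single incentive constraint $\|t\|_\infty(1-R^*) \ge c_2 - c_1 = a$ is tight at the optimum, which is precisely the content of \citep[Theorem 2]{saig2023delegated} specialized here. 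Assembling these pieces gives both the upper bound and its tightness.
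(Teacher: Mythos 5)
Your overall skeleton matches the paper's: both reduce the claim to the two facts that the $b$-cost-robust min-budget contract has budget $B^b = b/(1-R^*)$ (directly from \Cref{thm:main} with the uniform-cost profile $(0,\dots,0,b)$) and that the cost-aware optimum satisfies $B^* \ge a/(1-R^*)$, giving the ratio $b/a$. The paper proves the second inequality cleanly via the feasible-region inclusion \Cref{lem:dominating_cost_vector_feasible_region_inclusion}: since $c^{-a}_i := c_n - a \ge c_i$ componentwise, $\FeasibleContracts_{(F,c)} \subseteq \FeasibleContracts_{(F,c^{-a})}$, hence $B^* \ge B^*_{(F,c^{-a})} = a/(1-R^*)$ by \Cref{lem:truncated_cost_minimax_test}. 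You instead try a direct IC argument, which arrives at the same inequality but via an intermediate claim that is actually false.

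The specific gap: you assert that $\|t^*\|_\infty(1-R^*) \ge \max_i(c_n - c_i)$. This does not hold in general. Write $\psi^{t^*} = t^*/\|t^*\|_\infty$; the IC constraints say $B^*(1 - R_i(\psi^{t^*})) \ge c_n - c_i$ \emph{per action}, not $B^*(1-R^*) \ge c_n - c_i$. For a concrete counterexample take $n=3$, $m=2$, $F_1 = (1,0)$, $F_2 = (0.5,0.5)$, $F_3 = (0,1)$, $c = (0, 9, 10)$: then $R^* = 0.5$, $B^* = 10$ (the $i=1$ constraint binds since $\TV{F_1}{F_3}=1$ but the gap is $10$), so $B^*(1-R^*) = 5 < 10 = \max_i(c_n - c_i)$. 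What \emph{is} true, and suffices, is $B^*(1-R^*) \ge a = \min_i(c_n-c_i)$: pick $i^*=\argmax_i R_i(\psi^{t^*})$; then $B^*(1-R_{i^*}(\psi^{t^*})) \ge c_n - c_{i^*} \ge a$, and since $R_{i^*}(\psi^{t^*}) = R(\psi^{t^*}) \ge R^*$ by minimax optimality of $\psi^*_R$, we get $B^* \ge a/(1 - R_{i^*}(\psi^{t^*})) \ge a/(1-R^*)$. This extra step (or the paper's inclusion lemma, which bypasses it) is the missing piece. Also note the ``iff'' you state for scaled tests $\gamma\psi^*_R$ only holds in the ``if'' direction in general; the ``only if'' requires per-action risks. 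For the tightness claim, your two-action two-outcome instance with $c_1=0$, $c_2=a$ and a slack upper bound $b\ge a$ is correct and in fact simpler than the paper's three-action construction: there $B^* = a/(1-R^*)$ is exact by the single binding IC constraint, and $B^b = b/(1-R^*)$, giving ratio exactly $b/a$. The paper's example instead arranges the ratio to be tight when $a,b$ are the actual min/max gaps, which is a stronger (and more informative) instance, but yours already suffices for the theorem as stated.
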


Proof in \Cref{appx:proofs-properties}.
As a corollary, combining this result with \Cref{thm:main} shows that statistical contracts are approximately optimal in the global sense: For any contract setting $(F,c)$ with corresponding minimax sum-optimal hypothesis test $\psi^*_R$, the contract
$t=\frac{c_n-c_1}{1-R^*}\psi^*_R$ is $\eta$-optimal with respect to the budget metric and $\eta=\frac{c_n-c_1}{c_n-c_{n-1}}$. 
We also note that similar results hold for min-\emph{pay} cost-robust contracts.

We next turn to consider the functional form of optimal cost-robust contracts (i.e., why their payments are as they are).
One of the criticisms of optimal (non-robust) contracts is that the payments seem arbitrary and opaque. Compared to this, cost-robust contracts are more transparent and explainable. We show two additional results regarding their format, leveraging the connection to minimax hypothesis testing. 

The first result explains the budget: By the minimax principle, there is a ``least favorable distribution'' (to use terminology from statistics) or, equivalently, a mixed strategy over the rows of $F$ (to use terminology from game theory) such that no test can achieve for it better risk than the minimax risk $R^*$. We show that the budget of the optimal cost-robust contract can be interpreted using this distribution. Formally, let $\TV{F_1}{F_2}$ be the total variation distance between distributions $F_1,F_2$, then the budget is as follows (see Appendix~\ref{appx:proofs-properties} for a proof):

\begin{proposition}[Distribution distance determines budget]
\label{pro:optimal_truncated_cost_contract_linear_combination}
For every contract setting with distribution matrix $F$ and spread of costs $c_n-c_1\le b$, the minimum budget of a $b$-cost-robust contract is
$\max_{\lambda\in\DistOver{[n-1]}} b/\TV{F_n}{\sum_{i<n} \lambda_i F_i}.$
\end{proposition}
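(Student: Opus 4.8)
The plan is to combine \Cref{thm:main} with the game-theoretic/statistical characterization of the minimax sum-optimal risk $R^*$. By the first bullet of \Cref{thm:main}, the minimum budget of a $b$-cost-robust contract equals $b/(1-R^*)$, where $R^*=\min_{\psi\in[0,1]^m}\max_{k\in[n-1]}\{\FP_k(\psi)+\FN(\psi)\}$. So the proposition reduces to showing
\[
1-R^* \;=\; \max_{\lambda\in\DistOver{[n-1]}} \TV{F_n}{\textstyle\sum_{i<n}\lambda_i F_i}.
\]

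First I would apply a minimax (Sion) theorem to swap the $\min_\psi$ and the worst-case over the null hypothesis. The inner objective $\max_{k}\{\FP_k+\FN\}$ equals $\max_{\lambda\in\DistOver{[n-1]}}\sum_k \lambda_k(\FP_k+\FN) = \max_{\lambda}\left(\sum_j (\sum_k \lambda_k F_{k,j})\psi_j + \sum_j F_{n,j}(1-\psi_j)\right)$, which is linear in $\psi$ and linear (hence concave) in $\lambda$; both $[0,1]^m$ and the simplex are convex and compact, so Sion's theorem gives
\[
R^* \;=\; \max_{\lambda\in\DistOver{[n-1]}} \min_{\psi\in[0,1]^m} \left(\sum_j G^\lambda_j \psi_j + \sum_j F_{n,j}(1-\psi_j)\right),
\]
where $G^\lambda := \sum_{i<n}\lambda_i F_i$. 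For fixed $\lambda$, the inner minimization is separable coordinatewise: the coefficient of $\psi_j$ is $G^\lambda_j - F_{n,j}$, so the optimal test sets $\psi_j=1$ when $G^\lambda_j<F_{n,j}$ and $\psi_j=0$ otherwise, yielding inner value $\sum_j \min(G^\lambda_j, F_{n,j})$. Since $\sum_j \min(G^\lambda_j,F_{n,j}) = 1 - \TV{F_n}{G^\lambda}$ (the standard identity relating total variation to the overlap of two probability vectors), we obtain $R^* = \max_\lambda \left(1 - \TV{F_n}{G^\lambda}\right) = 1 - \min_\lambda \TV{F_n}{G^\lambda}$, hence $1-R^* = \min_\lambda \TV{F_n}{G^\lambda}$.

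This gives $1-R^* = \min_{\lambda} \TV{F_n}{\sum_{i<n}\lambda_i F_i}$, so the minimum budget is $b/\min_\lambda \TV{F_n}{\sum_{i<n}\lambda_i F_i} = \max_\lambda \, b/\TV{F_n}{\sum_{i<n}\lambda_i F_i}$ (valid since we assumed $R^*<1$, equivalently the minimum TV distance is strictly positive, so the budget is finite and the $\max$ of reciprocals matches the reciprocal of the $\min$). This is exactly the claimed expression.

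The main obstacle I anticipate is verifying the hypotheses of the minimax theorem cleanly — in particular confirming that the objective, after introducing $\lambda$, is genuinely concave-convex on compact convex domains, and handling the edge case where $R^*=1$ (which the paper excludes by assumption, so I would just note that the stated setting guarantees $\min_\lambda\TV{F_n}{G^\lambda}>0$). A secondary, purely bookkeeping point is making sure the coordinatewise optimal test is an admissible element of $[0,1]^m$ (it is, being $\{0,1\}$-valued) and that ties $G^\lambda_j=F_{n,j}$ are handled consistently; neither affects the value. Everything else — the TV/overlap identity and plugging into \Cref{thm:main} — is routine.
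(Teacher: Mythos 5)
Your proof is correct, and it takes a genuinely different route from the paper. The paper proves \Cref{pro:optimal_truncated_cost_contract_linear_combination} by working directly on linear programs: it starts from the ``statistical'' MIN-BUDGET LP \veryshortref{}{eq:statistical_min_budget_lp}, passes to the explicit dual \veryshortref{}{eq:min_budget_statistical_lp_dual}, applies the rescaling $\lambda = b\tilde{\lambda}$, $\mu = b\tilde{\mu}$, invokes complementary slackness to conclude $\sum_{i<n}\lambda_i = 1$, and then recognizes $\sum_j\left(F_{n,j}-\sum_{i<n}\lambda_i F_{i,j}\right)^+$ as a total-variation distance. You instead start from \Cref{thm:main}'s characterization of the optimal budget as $b/(1-R^*)$ and then directly identify $1-R^*$ with $\min_\lambda \TV{F_n}{\sum_{i<n}\lambda_i F_i}$ by (i) writing the worst-case risk over the finite null set as a $\max$ over mixing weights $\lambda$, (ii) swapping $\min_\psi$ and $\max_\lambda$ via Sion/von Neumann (justified by bilinearity and compact convex domains), and (iii) solving the inner minimization coordinate-wise to get the overlap $\sum_j \min(G^\lambda_j, F_{n,j})$. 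Both are, at heart, strong duality arguments, but yours makes the statistical interpretation fully explicit: the optimal $\lambda$ in your $\max$ is precisely the least-favorable mixture over the composite null, and the inner minimizer is the likelihood-ratio test against it --- the ``game'' structure is visible rather than buried in the dual variables. The paper's LP-dual route has the advantage of reusing machinery (\Cref{pro:dual}) that is also applied elsewhere and avoids appealing to a minimax theorem by name, but it is arguably less illuminating about \emph{why} the least-favorable distribution appears. Your care in noting the $R^*<1$ assumption (to justify passing from $1/\min$ to $\max$ of reciprocals) and the harmlessness of ties at $G^\lambda_j = F_{n,j}$ is appropriate; neither point is problematic.
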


The distribution $\sum_{i<n} \lambda_i F_i$ that maximizes the above expression is the least favorable one. Intuitively, the closer it is to the target distribution $F_n$, the larger the budget needed for the agent to distinguish among them and prefer the target action.
We also note that the required budget approaches infinity when the worst-case distribution distance approaches zero, coinciding with the implementability characterization known from the literature (see \Cref{subsec:implementability_conditions}).
The second set of results adds
standard structure to the distribution matrix $F$ to obtain even simpler contract formats:

\begin{definition}[Monotone Likelihood Ratio (MLR)]
A distribution matrix $F$ satisfies \emph{MLR} 
if $F_{i,j}/F_{i',j}$ is monotonically increasing in $j\in[m]$ for all $i>i'$.
\end{definition}

Intuitively, if $F$ satisfies MLR, then the higher the outcome $j$, the more likely it is to origin from a more costly distribution 
$F_i$ than from $F_{i'}$ (recall that costs $c_i$ are increasing in~$i$). 
Consider minimax composite hypothesis tests for $\Set{F_k}_{k=1}^{n-1}$, $\Set{F_n}$; if $F$ does not satisfy MLR, optimal such tests may require randomization (i.e., $\psi_j\in(0,1)$ for some outcome $j$) and/or non-monotonicity (i.e. $\psi_j>\psi_{j'}$ for some pair of outcomes $j<j'$). However, if MLR holds for $F$, then nice properties (determinism, monotonicity) hold for minimax tests (e.g., by the Karlin-Rubin theorem~\cite{rigollet2015high}), and consequently also for cost-robust contracts (see Appendix~\ref{appx:proofs-mlr} for a proof): 

\begin{proposition}[MLR induces threshold simplicity]
\label{pro:MLR-threshold}
For every contract setting with distribution matrix $F$ that satisfies MLR, and with spread of costs $c_n-c_1\le b$, the min-budget $b$-cost-robust contract for $F$ is a monotone threshold contract, which pays full budget to the agent for every outcome~$j$ above some threshold $j^*$.%
\end{proposition}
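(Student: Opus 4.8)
The plan is to invoke \Cref{thm:main}, which identifies the min-budget $b$-cost-robust contract with the sum-optimal statistical contract $B^*_R(b)\cdot\psi^*_R$, where the scalar $B^*_R(b)=b/(1-R^*)$ is positive and finite (recall we assume $R^*<1$). Since scaling by a positive constant preserves monotonicity and the threshold structure, it suffices to exhibit a minimax sum-optimal test for $\Set{F_k}_{k=1}^{n-1}$, $\Set{F_n}$ that is a deterministic monotone threshold test, i.e.\ of the form $\psi_j=\Indicator{j\ge j^*}$; the corresponding statistical contract then pays the full budget $B^*_R(b)$ for every $j\ge j^*$ and $0$ otherwise, which is exactly the asserted format. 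Because a minimax optimal test need not be unique, the conclusion is best read as: \emph{some} min-budget $b$-cost-robust contract has this form, and under the paper's tie-breaking convention it is the one selected.

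First I would reduce the composite null to a single simple null. Under MLR, each ratio $F_{n,j}/F_{k,j}$ is nondecreasing in $j$, and MLR implies first-order stochastic dominance, so $F_{k+1}$ stochastically dominates $F_k$ for every $k<n$. Hence for any threshold test $\psi_j=\Indicator{j\ge j^*}$ the false-positive rate $\FP_k=\Pr_{F_k}(j\ge j^*)$ is nondecreasing in $k$, and therefore $\max_{k\in[n-1]}R_k(\psi)=R_{n-1}(\psi)$, where $R_k(\psi)=\FP_k+\FN$. Next I would solve the simple problem between $F_{n-1}$ and $F_n$: by the Neyman--Pearson lemma, $R_{n-1}(\psi)$ is minimized over all $\psi\in[0,1]^m$ by the likelihood-ratio test, and a pointwise inspection of the per-outcome contribution $F_{n,j}+\psi_j(F_{n-1,j}-F_{n,j})$ shows that under MLR this optimal test can be taken as $\psi^\dagger_j=\Indicator{j\ge j^*}$ with $j^*=\min\{j: F_{n,j}\ge F_{n-1,j}\}$ (ties, where the likelihood ratio equals $1$, contribute nothing to the objective and may be resolved deterministically); equivalently, this is the Karlin--Rubin theorem.

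It then remains to check that $\psi^\dagger$ is minimax optimal among \emph{all} tests, not merely among threshold ones. For an arbitrary $\psi\in[0,1]^m$ we have $\max_{k}R_k(\psi)\ge R_{n-1}(\psi)\ge R_{n-1}(\psi^\dagger)$, the last inequality by Neyman--Pearson optimality of $\psi^\dagger$; and $\max_k R_k(\psi^\dagger)=R_{n-1}(\psi^\dagger)$ since $\psi^\dagger$ is a threshold test. Chaining these gives $\max_k R_k(\psi^\dagger)\le\max_k R_k(\psi)$ for all $\psi$, so $\psi^\dagger$ attains the minimax risk $R^*=R_{n-1}(\psi^\dagger)$ and we may take $\psi^*_R=\psi^\dagger$ in \Cref{thm:main}. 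Consequently the contract $B^*_R(b)\cdot\psi^\dagger$ is a min-budget $b$-cost-robust contract, and by construction it is monotone and pays the full budget $B^*_R(b)$ exactly for outcomes $j\ge j^*$.

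The conceptual content here is light; the main care points are (i) the non-uniqueness/tie-breaking discussion above, so that ``the'' min-budget contract is legitimately of threshold form, and (ii) the two classical facts being used --- MLR $\Rightarrow$ first-order stochastic dominance, and that the Neyman--Pearson (equivalently Karlin--Rubin) optimal test under MLR is a threshold on $j$ --- which I would cite rather than reprove (e.g.\ \cite{rigollet2015high}). A minor edge case is outcomes $j$ with $F_{n,j}=F_{n-1,j}=0$, where the likelihood ratio is undefined; these are handled by any fixed convention and do not affect $R_{n-1}$.
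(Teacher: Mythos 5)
Your proof is correct and follows essentially the same route as the paper's: reduce to the minimax sum-optimal test via \Cref{thm:main}, establish that this test is a threshold test under MLR, and scale. The paper simply cites the Karlin--Rubin theorem for the composite-hypothesis claim; you instead derive it from the simple Neyman--Pearson lemma by first showing (via MLR $\Rightarrow$ first-order stochastic dominance) that the worst-case risk among $\Set{F_k}_{k<n}$ is attained at $F_{n-1}$, which is a clean and self-contained way to supply the detail the paper leaves implicit, and your tie-breaking remark appropriately addresses non-uniqueness of the minimax test.
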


For min-pay rather than min-budget, under MLR we get a monotone contract with a single positive payment, which is optimal (see \cite[Lemma 7]{dutting2019simple}). 
Finally, we note that finding cost-robust min-budget contracts with two levels of pay (``all-or-nothing'' \citep{saig2023delegated}) is computationally hard in the general case, as the reduction by \citep[Theorem 3]{saig2023delegated} also applies in the cost-robust case (see \Cref{subsec:all_or_nothing_hardness}).

\section{Empirical Evaluation}

\label{sub:empirical_setup}

We evaluate the empirical performance of our cost-robust contracts using LLM evaluation benchmarks. 
We compute binary and multi-outcome contracts for two distinct families of tasks based on evaluation scores from known benchmark datasets, optimizing the contract objectives set forth in \Cref{sub:prelim-contracts}.
Our action space consists of the 7B, 13B, and 70B parameter model versions of the open-source Llama2 and CodeLlama LLMs \cite{touvron2023llama2,roziere2023code}, which share the same architecture and hence similar inference costs.
The benchmark data is used to create an empirical outcome distribution for each LLM in the action space. 
In both cases, contract optimization targets of the largest model variant (70B), which is the most performant and costly. Implementation details are provided in \Cref{sub:implementation_details}, and code is available at: \CodeURL{}.

\paragraph{Cost estimation.} To estimate the inference costs of the language models, we leverage their open-source availability. We use energy consumption data from the popular Hugging Face LLM-Performance Leaderboard \citep{llm-perf-leaderboard24, optimum-benchmark24}, which we then convert to dollar values using conservative cost estimates (see \Cref{app:experiments_inference_costs_details}).
As a first-order assumption of cost uncertainty, we assume that inference costs of alternative generators are bounded from below by the cost of the most energy-efficient alternative model ($c_1$), and bounded from above by the cost of the alternative model with the highest energy consumption ($c_{n-1}$).

\subsection{Binary-outcome contracts across tasks (code generation)}
\label{sec:codegen}
\begin{figure}
    \centering
    \includegraphics[width=\textwidth]{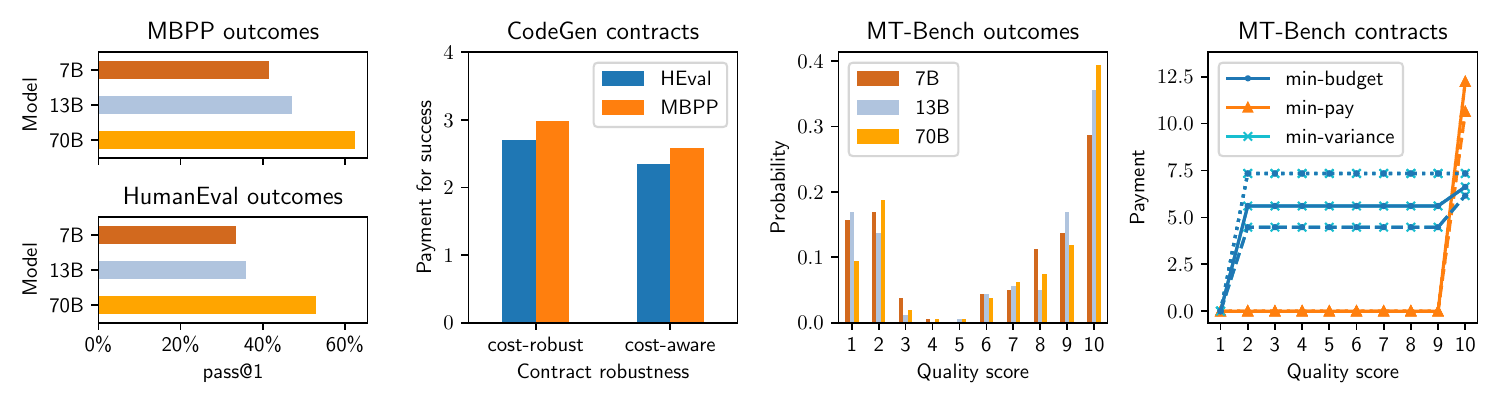}
    \caption{
    Empirical evaluation results.
    (\textbf{Left}) Outcome distributions and optimal contracts for Code Generation data, \Cref{sec:codegen}. 
    (\textbf{Right}) Outcome distributions and optimal contracts for MT-Bench data, \Cref{subsec:multi_outcome}. For the contracts plot, solid lines represent cost-robust contracts, dashed lines represent cost-aware contracts, and dotted lines represent threshold contracts.
    } 
 \label{fig:distributions_and_contracts_codegen_and_mtbench}
\end{figure}

We begin by analyzing a simple contract design setting across benchmarks of varying difficulties. 
We use the LLM task of code-generation  which has $m=2$ outcomes: pass or fail.
The analysis of a binary outcome space is motivated by the following theoretical property:
\begin{proposition}
\label{proposition:binary_outcome_optimal_contracts_are_identical}
For any contract design problem with $m=2$ where the most-costly (target) action has the highest pass rate, the optimal contract is identical for all optimality objectives (min-pay, min-budget, and min-variance). Moreover, the optimal contract satisfies $t_\mathrm{pass}>0$, $t_\mathrm{fail}=0$.
\end{proposition}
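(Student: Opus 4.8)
The plan is to reduce the problem to a two-variable optimization and solve each objective by inspection. Denote the two outcomes by $\mathrm{fail}$ and $\mathrm{pass}$, write a contract as $t=(t_{\mathrm{fail}},t_{\mathrm{pass}})\in\NonNegativeReals^2$, and for each action $i\in[n]$ let $p_i:=F_{i,\mathrm{pass}}$ be its pass rate, so by hypothesis $p_n$ is the (strictly) largest. Since $u_A(t;\cdot)$ is affine in the mixed action, the tie-breaking convention makes $t\in\FeasibleContracts(g_n)$ equivalent to $g_n$ being a weak best response, i.e.\ $u_A(t;g_n)\ge u_A(t;g_i)$ for all $i<n$. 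Using $u_A(t;g_i)=t_{\mathrm{fail}}+p_i\,(t_{\mathrm{pass}}-t_{\mathrm{fail}})-c_i$, this is exactly $(p_n-p_i)(t_{\mathrm{pass}}-t_{\mathrm{fail}})\ge c_n-c_i$ for every $i<n$. Because $c_1\le\dots\le c_{n-1}<c_n$ gives $c_n-c_i>0$, a feasible contract can exist only if $p_n>p_i$ for all $i<n$, which is precisely what the ``highest pass rate'' hypothesis asserts. Hence $\FeasibleContracts(g_n)=\{t\in\NonNegativeReals^2:\ t_{\mathrm{pass}}-t_{\mathrm{fail}}\ge\Delta^*\}$ with $\Delta^*:=\max_{i<n}(c_n-c_i)/(p_n-p_i)>0$; in particular every feasible $t$ has $t_{\mathrm{pass}}\ge t_{\mathrm{fail}}+\Delta^*>t_{\mathrm{fail}}\ge0$.

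Next I would minimize each objective over $\FeasibleContracts(g_n)$, writing $\Delta:=t_{\mathrm{pass}}-t_{\mathrm{fail}}\ge\Delta^*$. For min-budget, $\Norm{t}_\infty=t_{\mathrm{pass}}=t_{\mathrm{fail}}+\Delta\ge\Delta^*$, with equality iff $t_{\mathrm{fail}}=0$ and $\Delta=\Delta^*$. For min-pay, $\expect{j\sim F_n}{t_j}=(1-p_n)t_{\mathrm{fail}}+p_n t_{\mathrm{pass}}=t_{\mathrm{fail}}+p_n\Delta\ge p_n\Delta^*$, again with equality iff $t_{\mathrm{fail}}=0$ and $\Delta=\Delta^*$. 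For min-variance, $\Var_{j\sim F_n}(t_j)=p_n(1-p_n)\Delta^2$ is minimized exactly at $\Delta=\Delta^*$; depending on how Appendix~\ref{appx:variance} formalizes this objective (as the shift-invariant variance, or as a genuine weighted $\ell_2$ norm such as $\sqrt{\expect{j\sim F_n}{t_j^2}}$), a one-line check shows its minimizer over the feasible set is again $t_{\mathrm{fail}}=0$, $\Delta=\Delta^*$ --- uniquely in the norm case, and after the appendix's tie-break within the ray $\{(t_{\mathrm{fail}},t_{\mathrm{fail}}+\Delta^*)\}_{t_{\mathrm{fail}}\ge0}$ of minimizers in the variance case. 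In every case the optimum is the single contract $t^*=(t_{\mathrm{fail}},t_{\mathrm{pass}})=(0,\Delta^*)$, which satisfies $t_{\mathrm{pass}}=\Delta^*>0=t_{\mathrm{fail}}$, proving the claim.

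The only genuinely delicate point is the min-variance objective: because $\Var_{F_n}(t_j)$ depends only on the gap $\Delta$ and is invariant to a common additive shift of $t_{\mathrm{fail}}$ and $t_{\mathrm{pass}}$, it does not by itself single out $t_{\mathrm{fail}}=0$, and I would resolve this either by invoking the precise definition from Appendix~\ref{appx:variance} or by observing that among variance-minimizing contracts the one that is simultaneously min-pay (equivalently min-budget) is $(0,\Delta^*)$. Everything else --- establishing the feasible region $\{t\ge0:\ t_{\mathrm{pass}}-t_{\mathrm{fail}}\ge\Delta^*\}$ and carrying out the three elementary minimizations --- is routine, so the proof is short once the correct feasible region is in hand.
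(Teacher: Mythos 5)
Your proof is correct and follows essentially the same route as the paper's (\Cref{appendix:supporting_proofs_for_experiments}): both reduce to contracts of the form $(0,t_{\mathrm{pass}})$ via the common-shift argument, observe that every feasible contract must have $t_{\mathrm{pass}} > t_{\mathrm{fail}}$, and conclude by noting that all three objectives are monotone in the single remaining variable under the same IC constraint. The paper packages the shift step into two auxiliary claims (\Cref{claim:two_outcome_feasible_contract_has_zero_payment_for_failures}, \Cref{claim:two_outcome_optimal_contract_has_zero_payment_for_failures}) rather than writing the feasible region $\{t_{\mathrm{pass}}-t_{\mathrm{fail}}\ge\Delta^*\}$ explicitly as you do, but this is a presentational rather than substantive difference. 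You are also right to flag the min-variance non-uniqueness: since $\Var_{F_n}(t)$ is shift-invariant, the variance objective alone does not pin down $t_{\mathrm{fail}}=0$; the paper's \Cref{claim:two_outcome_optimal_contract_has_zero_payment_for_failures} only establishes that a minimizer of this form exists (``same variance''), and, like you, it implicitly selects the canonical representative $(0,\Delta^*)$ among the ray of variance-minimizers.
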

Proof in \Cref{appendix:supporting_proofs_for_experiments}.
\Cref{proposition:binary_outcome_optimal_contracts_are_identical} allows us to compare performance across different evaluation tasks without being sensitive to the choice of contract objective and constraints such as monotonicity.

\paragraph{Datasets.}
We use evaluation data from two distinct benchmarks, which 
represent
differing degrees of task difficulty.
The Mostly Basic Programming Problems (MBPP) benchmark \cite{austin2021program} contains 974 entry-level programming problems with 3 unit tests per problem. 
The HumanEval benchmark \cite{chen2021evaluating} consists of 164 hand-written functional programming problems. Included in each programming problem is a function signature, a doc-string prompt,
and unit tests. There is an average of 7.7 unit tests per problem, and the overall score is based on a pass/fail evaluation of the responses.
For each of these benchmarks, we create a binary-outcome ($m=2$) contract from the pass rates of the CodeLlama model family (\ModelName{CodeLlama-\{7B,13B,70B\}}). 
We use the pass@1 values from the CodeLlama paper \cite{roziere2023code} (success rates for a single response), as they capture a setting where the agent gets paid for each response.

\newpage
\paragraph{Task difficulty and optimal pay.}
\Cref{fig:distributions_and_contracts_codegen_and_mtbench} (Left-Center) presents optimal cost-aware and cost-robust contracts for code-generation.
We observe that cost uncertainty entails a consistent increase in payment across tasks: 
For MBPP, we observe a 14.9\% increase, and for HumanEval we observe a similar 14.7\% increase.
Additionally, while the MBPP task is easier than HumanEval (i.e. characterized by higher pass rates), the resulting contracts for MBPP are more expensive. This demonstrates the fundamental connection between contracts and statistics:
In MBPP, there is smaller gap between the performance of the target model (70B) and the performance of the alternatives. This makes the highest-performing model harder to detect, increasing the cost of the contract. The required payments in this case thus depend on the absolute differences between pass rates, rather than absolute values.

\subsection{Multi-outcome contracts}
\label{subsec:multi_outcome}
To understand the relation between different optimality objectives and constraints, we analyze optimal contracts in an expressive multi-outcome ($m=10$) environment based on MT-Bench.

\paragraph{Dataset.}
The MT-Bench benchmark \citep{zheng2024judging} is designed to evaluate the conversational and instruction-following abilities of LLMs in multi-turn (MT) conversational settings.
The benchmark consists of 80 prompts in the format of multi-turn questions, 
and the evaluation dataset includes LLM-as-a-judge evaluations on (prompt,response) pairs from various models, using GPT-4 as the judge. In the dataset, each (prompt,response) pair is given discrete \emph{response quality} scores in the range $1,\dots,10$. These scores define our contract outcome space. 
In consistence with the analysis in \cref{sec:codegen}, we use the outcome distributions of (\ModelName{Llama-2-\{7B,13B,70B\}-chat}), and target the 70B model. 
Outcome distributions for the MT-Bench dataset are presented in \Cref{fig:distributions_and_contracts_codegen_and_mtbench}.

\paragraph{Simple optimal contracts.}
In practical applications, contracts with a simple functional form are often preferred since they are easier to comprehend.
We compute optimal contracts with two types of simplicity constraints:
monotone contracts (weakly increasing payout),
and threshold contracts (full budget for all scores above a threshold, and zero otherwise).
Results are presented in \Cref{fig:distributions_and_contracts_codegen_and_mtbench}~(Right), and \Cref{tab:cost_aware_vs_cost_robust_monotone}.
For the min-budget and min-variance criteria, monotone cost-robust contracts have an intuitive three-level structure: Zero pay for outputs with the lowest quality score, base pay for intermediate scores, and extra pay for outputs with the highest quality score.
While threshold contracts may resemble current pricing schemes more closely (see \Cref{app:experiments_inference_costs_details}), 
monotone contracts enable a lower overall budget while still maintaining a simple functional form.
For min-pay, monotone cost-robust contracts are in themselves threshold contracts, however they may deter risk-averse agents as they only pay for highest-quality outputs. In \Cref{app:non_monotone_contracts}, we additionally analyze non-monotone contracts, and show that further economic efficiency can be achieved by sacrificing simplicity.

\paragraph{Price of cost-robustness.}
\Cref{tab:cost_aware_vs_cost_robust_monotone} compares cost-robust and cost-aware monotone contracts across different performance metrics.
 We observe that cost-robust contracts setting sacrifice a marginal increase in objective values: a $15\%$ increase in the min-pay objective, an $1.7\%$ increase if optimizing for budget, and $6.7\%$ increase when optimizing for minimum variance. 
 We refer to \Cref{app:non_monotone_contracts} for further analysis of cost-robustness in the non-monotone setting.
\begin{table}[t]
    \centering
    \begin{tabular}{ c c c c  c c c }
    \toprule
          & \multicolumn{3}{c}{\textbf{Cost-aware}} &  \multicolumn{3}{c}{\textbf{Cost-robust}} \\
          \cmidrule(r){2-4} \cmidrule(l){5-7}
           & $\expect{}{t}$ & Budget & $\mathrm{stdev}(t)$  &   $\expect{}{t}$ & Budget & $\mathrm{stdev}(t)$\\  
          \cmidrule(r){2-4} \cmidrule(l){5-7}
         Min-Pay & \textbf{4.19}  & 10.6 &5.2 & 4.82 (+15\%)& 12.2 & 5.98  \\
          Min-Budget & 4.73 & \textbf{6.16}  & 1.71 & 5.48 & 6.63 (+1.7\%) &1.83 \\
          Min-Variance & 4.73 & 6.16 & \textbf{1.71} & 5.48 &6.63 &1.83 (+6.7\%) \\
         \bottomrule   
    \end{tabular}
    \caption{Monotone contracts curated on the MT-bench dataset. Costs are in units of \$/1M tokens, and are order-of-magnitude estimates based on typical prices of electricity (see \Cref{app:experiments_inference_costs_details}). The numbers in bold denote the relative increase from the optimal monotone contract that the cost-robustness sacrifices in each setting. The percentages denote the price of cost-robustness: how much the objective values increase relative to the cost-aware setting.
    }
\label{tab:cost_aware_vs_cost_robust_monotone}
\end{table}

\newpage
\section{Discussion}
\label{sec:discussion}
In this paper, we introduce cost-robust contracts as a means to address the emerging problem of moral hazards in LLM inference. Our aim is to offer flexible payment schemes that ensure integrity in current LLM markets, even when facing challenges of incomplete information. One of the key insights from our study is that cost-robust contracts can be relevant and effective in practical settings. Moreover, we generalize the work paved by \citet{saig2023delegated} by uncovering stronger connections between the fields of contract design and statistical hypothesis testing. These connections underscore the statistical intuition that is prevalent in contract design.

Despite the promising results, our work still has several limitations that would do well to be addressed in future research. For one, the data we capture through the evaluation benchmarks does not accurately reflect real-world distributions, where the prompt space is much richer. A natural direction for future work is to explore approximation guarantees when learning contracts from data. Additionally, our analysis relies on a set of assumptions regarding the cost uncertainty and estimations, which should be carefully considered when designing contracts for Generative AI. Lastly, it would also be interesting to see our contract design framework applied to markets with a more elaborate action space.

\paragraph{Acknowledgements.} 
The authors would like to thank
Nir Rosenfeld,
Ariel Procaccia,
Stephen Bates,
and Michael Toker for their insightful remarks and valuable suggestions.
Eden Saig is supported by the Israel Council for Higher Education PBC scholarship for Ph.D. students in data science.
This work received funding from the European Research Council (ERC) under the European Union's Horizon 2020 research and innovation program (grant No.: 101077862, project: ALGOCONTRACT, PI: Inbal Talgam-Cohen), by the Israel Science Foundation (grant No.: 3331/24), by the NSF-BSF (grant No.: 2021680), and by a Google Research Scholar Award.

\bibliographystyle{plainnat}
\bibliography{citations.bib}

\newpage
\appendix
\section{Additional Related Work}
\label{appx:related}

\paragraph{Detection.} As a possible alternative to a contract-design approach, the LLM content detection literature develops tools which attempt to detect machine-generated text, and distinguish between different text generators \citep{uchendu2020authorship, li2023origin, yang2023dna, yang2023survey}. Using such tools, a principal could deploy an LLM content detector and penalize firms who are not labeled as using target text generator. From this perspective, contract design is a complementary approach which provides guidelines for positive incentives in case a generated text gets accepted, an approach considered more effective at encouraging participation. Additionally, our pay-for-performance framework supports richer outcomes spaces beyond binary pass/fail, enabling more granular, and thus more efficient, control of incentives.

\paragraph{AGT and LLMs.} On a broader perspective, our work further promotes the role of Algorithmic Game Theory in the economics of Generative AI. Previous works include: \citet{duetting2023mechanism} who design auctions that merge outputs from multiple LLMs; \citet{harris2023algorithmic} who offer a Bayesian Persuasion setting where the sender can use Generative AI to simulate the receiver's behavior; and \citet{fish2023generative} who leverage the creative nature of LLMs to enhance social choice settings.

\section{Extensions}
\label{app:extensions}

\subsection{Targeting a set of high-quality models}
\label{subsec:target_models_above_quality_threshold}
For high-stake tasks such as summarizing medical information, it makes sense to target the most advanced model. In other scenarios, specifying a single target action can act as an intermediate step toward the final contract design.

For instance, a principal might aim to incentivize the use of any model that meets a certain quality threshold, e.g., requiring text generation from any LLM with more than 70B parameters.
Formally, given a set of models $\Generators=\Set{g_1,\dots,g_n}$ with associated costs $c_1 < c_2 < \dots < c_n$, we assume that higher-cost models offer higher quality. Let $k\in[n]$ represent the index of the minimum quality model that the principal seeks to target, such that the goal is to incentivize any model $g_i \in \Set{g_k,\dots,g_n}$. 

To compute the optimal contract in this setting, the principal enumerates over the different target models $g_i \in \Set{g_k,\dots,g_n}$, designs an optimal single-target contract for each, and selects the ``best'' contract among the resulting designs -- formally, the contract minimizing $\Norm{t}$ for the appropriate norm (see \Cref{sub:prelim-contracts}). Since all enumerated contracts are designed to satisfy incentive compatibility and potentially cost-robustness, the optimal contract among them also satisfies these properties.

\section{Contract Implementability}

\subsection{Conditions for implementablity}
\label{subsec:implementability_conditions}
The implementatbility of min-pay contracts is discussed in \citep[Appendix A.2]{dutting2019simple}, and the implementatbility of min-budget contracts is discussed in \citep[Appendix B.3.1]{saig2023delegated}. In both cases, the implementability of the contract design problems is characterized by the following condition:

\begin{proposition}[Implementablity; {\citep[e.g.,][Proposition 2]{dutting2019simple}}]
\label{prop:contract_implementability_characterization}
In a contract design problem $(F,c)$ with $n$ possible actions, an action $i\in[n]$ is implementable if and only if there is no convex combination of alternative actions
that results in the same outcome distribution $\sum_{i'\neq i} \lambda_{i'} F_{i'} = F_i$ but lower cost $\sum_{i'\neq i} \lambda_{i'} c_{i'} < c_i$.
\end{proposition}

We note that \Cref{prop:contract_implementability_characterization} holds for all objectives $\Norm{t}$ described in \cref{eq:optimal_contract}, as feasibility only depends on the incentive compatibility constraints.
Adding to the result above, we show that implementability holds in the the intuitive case where the target model has the highest expected quality:

\begin{proposition}[Highest expected quality implies implementability]
For a contract design problem $(F,c)$, denote the expected quality of action $i$ by
$q_i
=\expect{j\in F_i}{j}
$.
If $q_i<q_n$ for all $i<n$, then the contract is implementable.
\end{proposition}
\begin{proof}
By contradiction. Assume that $q_i<q_n$ for all $i<n$ but the contract design problem is not implementable. By \Cref{prop:contract_implementability_characterization}, there exists a convex combination of actions such that $\sum_{i<n} \lambda_i F_i = F_n$ and $\sum_i c_i < c_n$. Denote the convex combination of distributions by $F_\lambda$. By definition, it holds that:
$$
\expect{j\sim F_\lambda}{j} 
= \sum_{i<n} \lambda_i \expect{j\sim F_i}{j}
= \sum_{i<n} \lambda_i q_i 
$$
But from the infeasibility assumption $\sum_{i<n} \lambda_i F_i = F_n$, and therefore it also holds that:
$$
\expect{j\sim F_\lambda}{j} 
= \expect{j\sim F_n}{j}
= q_n
$$
and therefore $\sum_{i<n} \lambda_i q_i<q_n$, which contradicts the initial assumption that $q_i<q_n$ for all $i$.
\end{proof}

\subsection{Designing cost-robust contracts for strictly-intermediate target actions}

Our main analysis technique for cost-robust contracts requires a separation between the interval covering the costs of target actions, and the interval covering the costs of alternative actions (see proofs of \Cref{lem:truncated_cost_minimax_test} and \Cref{lem:truncated_cost_minimax_test_ratio}).
This assumption does not hold when the principal targets a strictly intermediate model, formally $g_i\in\Generators$ such that $i<n$ and $c_i<c_n$. This captures scenarios where the principal seeks to incentivize text generation from a medium-sized model, but not from larger, costlier models that are available. In such instances, guaranteeing the implementability of cost-robust contracts requires additional assumptions.

To illustrate this, consider the following design setting, with $n=3$ actions and $m=2$ outcomes:
\begin{equation*}
\begin{array}{ll}
F_1=(1,0) & c_1=0 \\
F_2=(0,1) & c_2=1 \\
F_3=(0.5,0.5) & c_3=10
\end{array}
\end{equation*}
Assume that the principal targets action $i^*=2$. In this setting, the cost difference is upper-bounded by $c_{i^*}-c_i \le c_2-c_1=1$, and the cost-robust min-budget contract corresponding to the bound $b=1$ is given by \Cref{thm:main}:
$$
t=\left(0,\frac{b}{\TV{F_3}{F_2}}\right)
=\left(0,\frac{1}{0.5}\right)
=(0,2)
$$

This contract is incentive compatible with respect to the target action, yielding expected utilities 
$u_A(t,1)=0$ for action $1$,
$u_A(t,2)=1$ for action $2$,
and $u_A(t,3)=-9$ for action $3$.

However, consider the following setting, with identical target and different outcome distributions:
\begin{equation*}
\begin{array}{ll}
F_1=(1,0) & c_1=0 \\
F_2=(0.5,0.5) & c_2=1 \\
F_3=(0,1) & c_3=10
\end{array}
\end{equation*}
In this setting, an incentive-compatible contract exists (e.g., $t=(0,4)$), and the same cost upper bound $b=1$ holds, however the cost-robust contract design problems are unfeasible by \Cref{prop:contract_implementability_characterization}, as in the truncated contract design setting $\left(F=(F_1,F_2,F_3),c^\mathrm{const}=(0,1,0)\right)$ there exists a convex combination of alternative actions $\lambda_1=0.5,\lambda_3=0.5$ which satisfies 
$\sum_{i\in\Set{1,3}} \lambda_i F_i = F_2$ 
and
$\sum_{i\in\Set{1,3}} \lambda_i c_i^\mathrm{const} = 0 < c_2 = 1$.

\section{Deferred Proofs}
\label{appx:proofs}

\subsection{Convex programs and equivalent forms}
\label{subsec:linear_programs_and_equivalent_forms}

In this section, we include linear programs (LPs) for optimizing contracts and hypothesis tests.
Non-negativity constraints on the variables are ommitted where clear from context.

By definition (see \Cref{sub:prelim-contracts}), a contract $t$ is min-budget with respect to target action $i$ if and only if it is an optimal solution to the following MIN-BUDGET LP, where IC stands for incentive compatibility (i.e., the constraints that ensure the agent's best response to $t$ is choosing action $i$): 

\begin{equation}
\label{eq:min_budget_lp}
\begin{aligned}
\min_{t\in\NonNegativeReals^m, B\in\NonNegativeReals} \quad
& B
\\
\mathrm{s.t.} \quad
& \sum_j F_{i'j} t_j - c_{i'} \le \sum_j F_{ij} t_j - c_{i}
& \quad \forall i' \neq i
& \quad\mathrm{\ICConstraintName}
\\
& t_j \le B
& \quad \forall j\in[m]
& \quad\mathrm{\BudgetConstraintName}
\end{aligned}
\end{equation}

\begin{proposition}[Equivalent form to the MIN-BUDGET LP; {\citep[B.2]{saig2023delegated}}]
When \cref{eq:min_budget_lp} is feasible, the variable transformation $(t,B) \mapsto \left(\psi/\beta, 1/\beta \right)$ yields an equivalent LP which we refer to as the statistical LP:
\begin{equation}
\label{eq:statistical_min_budget_lp}
\begin{aligned}
\max_{\psi\in\UnitInterval^m, \beta\in\NonNegativeReals} \quad
& \beta
\\
\mathrm{s.t.} \quad
& \sum_j F_{i'j} \psi_j + \sum_j F_{ij} (1-\psi_j) \le 1 - (c_i-c_{i'}) \beta
& \quad \forall i' \neq i
\end{aligned}
\end{equation}
\end{proposition}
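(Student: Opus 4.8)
The plan is to verify that the stated substitution transforms the MIN-BUDGET LP into the statistical LP by direct algebraic manipulation, keeping careful track of feasibility and the direction of the objective. First I would observe that when \cref{eq:min_budget_lp} is feasible, the optimal budget $B$ is strictly positive: if $B=0$ were feasible then the all-zero contract would incentivize action $i$, but the \ICConstraintName{} constraint with the target cost $c_i>c_{i'}$ on the right and $c_{i'}$ subtracted on the left forces $0 - c_{i'} \le 0 - c_i$, i.e.\ $c_i \le c_{i'}$, contradicting $c_i > c_{i'}$ for at least the action $i'=n-1$ (or more generally contradicting that $i$ is the unique highest-cost target). Hence we may set $\beta := 1/B > 0$ and $\psi_j := t_j/B$, which is a bijection between feasible points with $B>0$ and pairs $(\psi,\beta)$ with $\beta>0$; the \BudgetConstraintName{} constraint $t_j \le B$ becomes exactly $\psi_j \le 1$, so $\psi \in [0,1]^m$, and non-negativity of $t$ becomes non-negativity of $\psi$.

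Next I would rewrite the \ICConstraintName{} constraint under this substitution. Starting from $\sum_j F_{i'j} t_j - c_{i'} \le \sum_j F_{ij} t_j - c_i$, substitute $t_j = \psi_j/\beta$ and rearrange to $\sum_j F_{i'j}\psi_j - \sum_j F_{ij}\psi_j \le (c_{i'} - c_i)\beta$, i.e.\ $\sum_j F_{ij}\psi_j - \sum_j F_{i'j}\psi_j \ge (c_i - c_{i'})\beta$. Using that $\sum_j F_{ij} = 1$, write $\sum_j F_{ij}\psi_j = 1 - \sum_j F_{ij}(1-\psi_j)$, so the constraint becomes $1 - \sum_j F_{ij}(1-\psi_j) - \sum_j F_{i'j}\psi_j \ge (c_i - c_{i'})\beta$, which is precisely $\sum_j F_{i'j}\psi_j + \sum_j F_{ij}(1-\psi_j) \le 1 - (c_i - c_{i'})\beta$, the constraint in \cref{eq:statistical_min_budget_lp}. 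Finally, minimizing $B$ over feasible points is the same as maximizing $\beta = 1/B$ over the image, so the two LPs have the same optimal solutions under the stated correspondence, and the optimal values satisfy $\beta^* = 1/B^*$.

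The only genuinely delicate point is the claim that feasibility of \cref{eq:min_budget_lp} implies the optimum has $B>0$ (equivalently, that feasibility is not achieved with $B=0$), so that $\beta = 1/B$ is well-defined; this is where the assumption that $i$ is the target action with strictly highest cost $c_i > c_{i'}$ for the relevant $i'$ is used, and I would state it explicitly. Everything else is routine rearrangement. I expect no real obstacle here; the main care is bookkeeping — making sure the inequality direction flips correctly when dividing by the positive $\beta$ and when converting $\min B$ to $\max \beta$, and noting that the correspondence $(t,B)\leftrightarrow(\psi,\beta)$ is a bijection on the respective feasible regions (restricted to $B>0$, which by the above includes all optima).
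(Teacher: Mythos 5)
Your proof is correct. The paper states this proposition with a citation to \citep[B.2]{saig2023delegated} rather than reproducing a proof, and your direct verification is exactly the expected argument: observe that any feasible point of \cref{eq:min_budget_lp} has $B>0$ (since $B=0$ forces $t=0$, whose IC constraint against $i'=n-1$ would require $c_n\le c_{n-1}$, contradicting the assumption $c_n>c_{n-1}$), apply the fractional (Charnes--Cooper) substitution $(\psi,\beta)=(t/B,1/B)$ to map the feasible regions bijectively, use $\sum_j F_{ij}=1$ to rewrite the IC constraint into the $\FP+\FN$ form, and note that $\min B$ corresponds to $\max\beta=1/B$.
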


Similarly to the MIN-BUDGET LP, we have the MIN-PAY LP:

\begin{equation}
\label{eq:min_pay_lp}
\begin{aligned}
\min_{t\in\NonNegativeReals^m} \quad
& \sum_{j} F_{ij} t_j
\\
\mathrm{s.t.} \quad
& \sum_j F_{i'j} t_j - c_{i'} \le \sum_j F_{ij} t_j - c_{i}
& \quad \forall i'\neq i
& \quad\mathrm{(IC)}
\end{aligned}
\end{equation}

There are also natural LP formulations for hypothesis testing. A hypothesis test $\psi$ is the minimax sum-optimal 
test w.r.t.~risk (see \Cref{def:minimal_ratio_hypothesis_test}) if and only if it is an optimal solution to the following LP:
\begin{equation}
\label{eq:minimax_hypothesis_test_lp}
\begin{aligned}
\min_{\psi\in\UnitInterval^m, r\in\NonNegativeReals} \quad
& r
\\
\mathrm{s.t.} \quad
& \sum_j F_{ij} \psi_j + \sum_j F_{nj} (1-\psi_j) \le r
& \quad \forall i<n
\end{aligned}
\end{equation}

\begin{proposition}[Dual of statistical LP]
\label{pro:dual}
The dual of \cref{eq:statistical_min_budget_lp} is given by:
\begin{equation}
\label{eq:min_budget_statistical_lp_dual}
\begin{aligned}
\min_{
\lambda\in\NonNegativeReals^{n-1},
\mu\in\NonNegativeReals^m} \quad
& \sum_{j\in[m]} \mu_j
\\
\mathrm{s.t.} \quad
&
\sum_{i<n} \left(F_{n,j}-F_{i,j}\right)\lambda_i \le \mu_j
& \quad \forall j \in [m]
\\
&
\sum_{i<n} \left(c_n-c_i\right) \lambda_i
\ge 1
\\
& \mu_j\ge 0 & \quad \forall j \in [m]
\\
& \lambda_i\ge 0 & \quad \forall i \in [n-1]
\end{aligned}
\end{equation}
\end{proposition}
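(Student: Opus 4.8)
The statement to prove is Proposition~\ref{pro:dual}: the dual of the statistical LP~\eqref{eq:statistical_min_budget_lp} is the LP~\eqref{eq:min_budget_statistical_lp_dual}. The plan is to write~\eqref{eq:statistical_min_budget_lp} in a canonical primal form, read off the standard LP dual mechanically, and then simplify. I would first rewrite the primal as a maximization in the variables $(\psi,\beta)\in\Reals_{\ge 0}^m\times\Reals_{\ge 0}$. Expanding the left-hand side of the $(i' \neq i)$ constraint (here $i=n$ is the target action, so $i'$ ranges over $[n-1]$), we have $\sum_j F_{i'j}\psi_j + \sum_j F_{nj}(1-\psi_j) = 1 + \sum_j (F_{i'j}-F_{nj})\psi_j$, so the constraint becomes $\sum_j (F_{i'j}-F_{nj})\psi_j + (c_n - c_{i'})\beta \le 0$, i.e. $\sum_j (F_{nj}-F_{i'j})\psi_j - (c_n-c_{i'})\beta \ge 0$. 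Equivalently, in $\le$ form suitable for a max-LP: $\sum_j (F_{i'j}-F_{nj})\psi_j + (c_n-c_{i'})\beta \le 0$ for each $i' \in [n-1]$. There are also the implicit box constraints $\psi_j \le 1$ for each $j\in[m]$ (from $\psi\in[0,1]^m$) and nonnegativity of all variables.

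Next I would assign dual variables: $\lambda_{i'}\ge 0$ to each of the $n-1$ constraints indexed by $i'\in[n-1]$, and $\mu_j\ge 0$ to each box constraint $\psi_j\le 1$. Since the primal maximizes $\beta$ subject to $\le$ constraints with variables $\ge 0$, the dual minimizes $\sum_j \mu_j \cdot 1 + \sum_{i'} \lambda_{i'}\cdot 0 = \sum_j \mu_j$ subject to one inequality per primal variable. For the primal variable $\psi_j$ (whose objective coefficient is $0$): the column of $\psi_j$ contributes $(F_{i'j}-F_{nj})$ to constraint $i'$ and $1$ to box constraint $j$, giving the dual constraint $\sum_{i'\in[n-1]}(F_{i'j}-F_{nj})\lambda_{i'} + \mu_j \ge 0$, i.e. $\mu_j \ge \sum_{i'<n}(F_{nj}-F_{i'j})\lambda_{i'}$ — exactly the first family of constraints in~\eqref{eq:min_budget_statistical_lp_dual}. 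For the primal variable $\beta$ (objective coefficient $1$): its column contributes $(c_n - c_{i'})$ to each constraint $i'$ and nothing to the box constraints, giving $\sum_{i'<n}(c_n-c_{i'})\lambda_{i'} \ge 1$ — the second constraint in~\eqref{eq:min_budget_statistical_lp_dual}. Together with $\lambda\ge 0$, $\mu\ge 0$, this reproduces~\eqref{eq:min_budget_statistical_lp_dual} verbatim (modulo renaming $i'\mapsto i$).

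The only genuine subtlety — and the step I'd be most careful about — is the bookkeeping of signs and the orientation of inequalities when converting the primal constraint, since the original form of~\eqref{eq:statistical_min_budget_lp} has the cost term $(c_i - c_{i'})\beta = (c_n-c_{i'})\beta$ appearing on the right-hand side; one must move it left with the correct sign so that the max-LP is in standard $\le$ form before dualizing, otherwise the dual's objective sense or the $\ge 1$ constraint flips. I would double-check this by verifying weak duality directly: multiply each primal constraint $i'$ by $\lambda_{i'}\ge 0$, each box constraint by $\mu_j\ge 0$, sum, and confirm that the resulting inequality reads $\beta \le \sum_j \mu_j$ whenever $(\psi,\beta)$ is primal-feasible and $(\lambda,\mu)$ is dual-feasible — this both sanity-checks the signs and confirms the dual objective is indeed $\sum_j\mu_j$. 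Strong duality then follows from LP duality since~\eqref{eq:statistical_min_budget_lp} is feasible by hypothesis (and bounded, as the constant test $\psi\equiv 1/2$ plus the risk bound $r(\psi)<1$ guarantees a finite optimum). No separate argument beyond standard LP duality is needed.
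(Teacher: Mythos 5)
Your proposal is correct and follows essentially the same route as the paper: rewrite the primal in canonical $\le$ form with nonnegative variables, expand the IC constraint to $\sum_j (F_{i'j}-F_{nj})\psi_j + (c_n-c_{i'})\beta \le 0$, treat $\psi_j\le 1$ as the remaining rows, and then read off the symmetric dual column-by-column. The paper makes the same steps explicit by writing out the block matrix $A$ and vectors $b,c$ and invoking $\min\{b^\top y : A^\top y\ge c,\ y\ge 0\}$, but the resulting dual constraints and objective are identical; your weak-duality sanity check is a nice extra but not needed.
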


\begin{proof}
Denote:
\begin{equation*}
\begin{aligned}
x&=(\psi_1,\dots,\psi_m,\beta)
\in \Reals^{m+1}.
\\
\\
A &= \begin{pmatrix}
F_{1,1}-F_{n,1}&&\cdots&&F_{1,m}-F_{n,m}&\vline& c_n-c_1 \\
&\ddots&&&&\vline&\vdots\\
\vdots&&F_{i,j}-F_{n,j}&&\vdots&\vline& c_n-c_i\\
&&&\ddots&&\vline&\vdots\\
F_{(n-1),1}-F_{n,1}&&\cdots&&F_{(n-1),m}-F_{n,m}&\vline& c_n - c_{n-1}
\\
&&&&&\vline \\
\hline 
&&&&&\vline&0\\
&&I_{m\times m}&&&\vline&\vdots\\
&&&&&\vline&0
\end{pmatrix}
\\
\\
b &= (
\underbrace{0,\dots,0}_{\text{$n-1$ times}},
\underbrace{1,\dots,1}_{\text{$m$ times}}
)
\in \Reals^{n-1+m}.
\\
\\
c &= (\underbrace{0,\dots,0}_{\text{$m$ times}},1)
\in \Reals^{m+1}.
\end{aligned}  
\end{equation*}
Using these notations, \cref{eq:statistical_min_budget_lp} can be written as:
\begin{equation}
\label{eq:statistical_lp_dual_primal_matrix_form}
\begin{aligned}
\max_{x\ge0} \quad
& c^T x
\\
\mathrm{s.t.} \quad
& Ax \le b
\end{aligned}
\end{equation}
The symmetric dual LP of \cref{eq:statistical_lp_dual_primal_matrix_form} is given by:
\begin{equation}
\label{eq:statistical_lp_dual_matrix_form}
\begin{aligned}
\min_{y\ge0} \quad
& b^T y
\\
\mathrm{s.t.} \quad
& A^Ty \ge c
\end{aligned}
\end{equation}
Denote:
$$
y = (\lambda_1, \dots, \lambda_{n-1},\mu_1,\dots,\mu_m).
$$
Unpacking the matrix notations in \cref{eq:statistical_lp_dual_matrix_form} yields the following:
\begin{equation*}
\begin{aligned}
\min_{
\lambda\in\NonNegativeReals^{n-1},
\mu\in\NonNegativeReals^m} \quad
& \sum_{j\in[m]} \mu_j
\\
\mathrm{s.t.} \quad
&
\sum_{i<n} \left(F_{i,j}-F_{n,j}\right)\lambda_i + \mu_j \ge 0
& \quad \forall j \in [m]
\\
&
\sum_{i<n} \left(c_n-c_i\right) \lambda_i
\ge 1
\end{aligned}
\end{equation*}
which is equivalent to \cref{eq:min_budget_statistical_lp_dual}.
\end{proof}

\begin{proposition}
\label{cla:min-variance}
A min-variance contract is an optimal solution for the following quadratic program:
\begin{equation}
\label{eq:min_variance_lp}
\begin{aligned}
\min_{t\in\NonNegativeReals^m} \quad
& t^T V t
\\
\mathrm{s.t.} \quad
& \sum_j F_{i'j} t_j - c_{i'} \le \sum_j F_{ij} t_j - c_{i}
& \quad \forall i' \neq i
& \quad\mathrm{\ICConstraintName}
\end{aligned}
\end{equation}
Where $
V
$ is a positive semi-definite matrix depending on the target action distribution $F_{i}$.
\end{proposition}
\begin{proof}
Denote the contract by $t\in\NonNegativeReals^m$ and the probability distribution of the target action by $p\in\DistOver{[m]}$. We use the following matrix notations:
\begin{equation*}
\begin{aligned}
\mathbf{p}=\begin{pmatrix}
    p_1\\
    \vdots\\
    p_m
\end{pmatrix}
;\quad
\mathbf{t}=\begin{pmatrix}
    t_1\\
    \vdots\\
    t_m
\end{pmatrix}
;\quad
\mathbf{1} = \begin{pmatrix}
    1\\
    \vdots\\
    1
\end{pmatrix}
;\quad
\diag{\mathbf{p}}=\begin{pmatrix}
p_1 \\
& \ddots \\
&& p_m
\end{pmatrix}
\end{aligned}
\end{equation*}
The variance of $t$ is given by:
\begin{equation*}
\begin{aligned}
\variance{t} 
&= \expect{j\sim p}{\left(t_j - \expect{}{t}\right)^2}
\\
&= \sum_j p_j \left(t_j - \sum_k p_k t_k\right)^2
\\
&=
\diag{\mathbf{p}}
\Norm{I \mathbf{t} - \mathbf{1} \mathbf{p}^T \mathbf{t}}^2
\\
&=
\Norm{\diag{\sqrt{\mathbf{p}}} \left( I - \mathbf{1} \mathbf{p}^T\right) \mathbf{t}}^2
\end{aligned}
\end{equation*}
Denote $
R=\diag{\sqrt{\mathbf{p}}} \left( I - \mathbf{1} \mathbf{p}^T\right)
$, and $V=R^TR$. Then:
\begin{equation*}
\begin{aligned}
\variance{t}
&= \Norm{Rt}
\\ &= t^T R^T R t
\\ &= t^T V t
\end{aligned}
\end{equation*}
Note that $V=R^T R$ is a Gram matrix. It is therefore positive semi-definite, and the quadratic program is convex.
\end{proof}

\subsection{Main lemmas} %
\label{sub:lemmas}

The next lemmas are the workhorses of our theoretical results.  
We use  
$\FeasibleContracts_{(F,c)}$ to denote the set of contracts incentivizing the target action $n$ in a contract design setting $(F,c)$; the contracts in $\FeasibleContracts_{(F,c)}$ are also known as the \emph{feasible solutions} of the setting. For simplicity we focus on settings for which the set of feasible solutions is nonempty (i.e., the target action is implementable).
Given either a non-decreasing cost vector $c=(c_1,\dots,c_{n-1},c_n)\in\NonNegativeReals^n$ and an index $k\in[n-1]$, or a cost $c_n$ and a constant $c'<c_n$, define
$$
c^{(k)} := (c_k,\dots,c_k,c_n)\in\NonNegativeReals^n;
~~~
c^{\mathrm{const}} := (c',\dots,c',c_n).
$$
These are vectors with uniform costs (up to $c_n)$. Note that the costs in $c^{(1)}$ are (weakly) lower than those in $c$, and vice versa for $c^{(n-1)}$:
$$
c^{(1)} \le c \le c^{(n-1)}
$$
Where the relation $\le$ is defined element-wise. Intuitively, since the agent gravitates towards lower costs, it is harder to incentivize the target action against lower costs. We formalize this as follows:

\begin{lemma}[Incentivizing against lower costs is harder]
\label{lem:dominating_cost_vector_feasible_region_inclusion}
Consider a %
distribution matrix $F$, and two cost vectors $c\le \overline{c}\in\NonNegativeReals^n$ satisfying $c_n={\overline{c}}_n$ (i.e., $c$ is dominated by $\overline{c}$, and the cost of the target action coincides).
Then the sets of feasible solutions for contract design settings $(F,c)$ and $(F,\overline{c})$ satisfy
$
\FeasibleContracts_{(F,c)}
\subseteq
\FeasibleContracts_{(F,\overline{c})}
$.
\end{lemma}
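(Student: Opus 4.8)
The goal is to show that if $c \le \overline c$ coordinate-wise with $c_n = \overline c_n$, then every contract that incentivizes the target action $n$ under costs $c$ also incentivizes it under costs $\overline c$. The natural approach is to unpack the incentive-compatibility (IC) constraints for both cost vectors and compare them directly. Fix a contract $t \in \FeasibleContracts_{(F,c)}$, so that action $n$ is a best response under $c$: for every $i \in [n-1]$,
\begin{equation*}
\sum_j F_{nj} t_j - c_n \ge \sum_j F_{ij} t_j - c_i.
\end{equation*}
I want to derive the analogous inequality with $c$ replaced by $\overline c$, i.e.
\begin{equation*}
\sum_j F_{nj} t_j - \overline c_n \ge \sum_j F_{ij} t_j - \overline c_i.
\end{equation*}

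\textbf{Key step.} Rearranging both inequalities, the hypothesis says $\sum_j (F_{nj} - F_{ij}) t_j \ge c_n - c_i$ and the desired conclusion says $\sum_j (F_{nj} - F_{ij}) t_j \ge \overline c_n - \overline c_i$. Since $\overline c_n = c_n$ and $\overline c_i \ge c_i$ (from $c \le \overline c$), we have $\overline c_n - \overline c_i = c_n - \overline c_i \le c_n - c_i$, so the right-hand side only decreases, and the conclusion follows immediately from the hypothesis. This handles the comparison against any \emph{pure} deviation $i$. To conclude that $t \in \FeasibleContracts_{(F,\overline c)}$, I then note that a pure action is a best response iff it weakly dominates every other pure action (a mixed best response argument is unnecessary since the agent's utility is linear in the mixing distribution $\sigma$, so the best pure action is at least as good as any mixed one). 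One also needs to confirm the tie-breaking/implementability convention carries over: if $t$ implements $n$ under $c$ in the strict sense used in the paper (action $n$ chosen with probability $1$ under the consistent tie-breaking rule), then since the gaps $\sum_j(F_{nj}-F_{ij})t_j - (\overline c_n - \overline c_i) \ge \sum_j(F_{nj}-F_{ij})t_j - (c_n-c_i) \ge 0$ are at least as large under $\overline c$, action $n$ remains a best response and the tie-breaking convention still selects it.

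\textbf{Main obstacle.} There is essentially no hard computational step here — the inequalities are one-line rearrangements. The only subtlety worth being careful about is the precise definition of "implements action $n$": the paper's $\sigma(t) = \delta_{g_n}$ requires action $n$ to be \emph{the} best response under the assumed consistent tie-breaking, not merely \emph{a} best response. I need to check that monotonicity of the IC slack under passing from $c$ to $\overline c$ is enough to preserve this, which it is: any action $i$ that was strictly suboptimal under $c$ stays strictly suboptimal under $\overline c$ (its slack strictly increases or stays the same), and any action tied with $n$ under $c$ remains tied-or-worse under $\overline c$, so the set of best responses under $\overline c$ is a subset of (or equal to) that under $c$, and in particular the tie-breaking rule — which agreed with the principal and selected $n$ under $c$ — continues to select $n$ under $\overline c$. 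With that observation in place, $\FeasibleContracts_{(F,c)} \subseteq \FeasibleContracts_{(F,\overline c)}$ follows.
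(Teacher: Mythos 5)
Your proof takes essentially the same route as the paper's: both unpack the IC constraint $\sum_j (F_{ij} - F_{nj}) t_j \le c_i - c_n$ for each $i<n$, observe that $c_i - c_n \le \overline{c}_i - \overline{c}_n$ since $c \le \overline{c}$ coordinate-wise with $c_n = \overline{c}_n$, and conclude that the $(F,\overline{c})$ constraints are a relaxation of the $(F,c)$ constraints, giving the inclusion. Your additional check that the slacks only increase, so the tie-breaking convention continues to select action $n$, is a valid elaboration of a point the paper leaves implicit.
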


\begin{corollary}  
\label{cor:truncated_cost_feasible_region_monotonicity}
For every contract design setting $(F,c)$, %
the set $\FeasibleContracts_{(F,c)}$ of feasible solutions satisfies
$
\FeasibleContracts_{(F,c^{(1)})}
\subseteq
\FeasibleContracts_{(F,c)}
\subseteq
\FeasibleContracts_{(F,c^{(n-1)})}.
$
\end{corollary}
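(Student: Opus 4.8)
The plan is to prove \Cref{cor:truncated_cost_feasible_region_monotonicity} as an immediate consequence of \Cref{lem:dominating_cost_vector_feasible_region_inclusion}, by exhibiting the right domination relations among the cost vectors $c^{(1)}$, $c$, and $c^{(n-1)}$. The key observation is that all three vectors share the same last coordinate $c_n$, so the hypothesis $c_n = \overline{c}_n$ of the lemma is satisfied in both applications; what remains is to check the coordinatewise inequalities on the first $n-1$ coordinates, which follow directly from the assumption that $c$ is non-decreasing.

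First I would spell out the two domination relations. Recall $c^{(1)} = (c_1,\dots,c_1,c_n)$ and $c^{(n-1)} = (c_{n-1},\dots,c_{n-1},c_n)$. Since $c_1\le c_i$ for every $i\in[n-1]$ (as $c$ is non-decreasing), we have $c^{(1)}\le c$ coordinatewise, and both agree in the last coordinate $c_n$; hence by \Cref{lem:dominating_cost_vector_feasible_region_inclusion} applied with the dominated vector $c^{(1)}$ and the dominating vector $c$, we get $\FeasibleContracts_{(F,c^{(1)})}\subseteq\FeasibleContracts_{(F,c)}$. Symmetrically, since $c_i\le c_{n-1}$ for every $i\in[n-1]$, we have $c\le c^{(n-1)}$ coordinatewise with matching last coordinate, so applying the lemma again with dominated vector $c$ and dominating vector $c^{(n-1)}$ yields $\FeasibleContracts_{(F,c)}\subseteq\FeasibleContracts_{(F,c^{(n-1)})}$. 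Chaining the two inclusions gives the corollary.

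One small point worth addressing explicitly is that \Cref{lem:dominating_cost_vector_feasible_region_inclusion} is stated for cost vectors in $\NonNegativeReals^n$ without an explicit monotonicity requirement, whereas $c^{(1)}$ and $c^{(n-1)}$ need only be valid cost vectors for the lemma to apply --- and they are non-decreasing precisely because $c_1\le c_n$ and $c_{n-1}\le c_n$, which hold by assumption. So the only ``content'' of the proof is the bookkeeping that the two uniform-cost vectors sandwich $c$ in the coordinatewise order while preserving the target cost $c_n$.

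I do not expect any real obstacle here: the corollary is essentially a two-line deduction, and the one thing to be careful about is getting the direction of the domination right in each of the two applications (the agent gravitates toward lower costs, so lower costs give a \emph{smaller} feasible set, which matches the stated inclusions). No separate lemmas or calculations are needed beyond invoking \Cref{lem:dominating_cost_vector_feasible_region_inclusion} twice and transitivity of set inclusion.
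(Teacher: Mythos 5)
Your proof is correct and matches the intended argument: the corollary follows by applying \Cref{lem:dominating_cost_vector_feasible_region_inclusion} twice, once with the dominated/dominating pair $(c^{(1)},c)$ and once with $(c,c^{(n-1)})$, using that $c$ is non-decreasing so $c_1\le c_i\le c_{n-1}$ for all $i\in[n-1]$ while the last coordinate $c_n$ is preserved. The paper states this as an immediate corollary without a separate proof, and your two-line deduction is exactly the intended one.
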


Consider now a contract setting $(F,c^{\mathrm{const}})$, where the action costs are uniformly equal to $c'$ except for the target action (which is more costly). We show that for such a setting, the optimal contract for incentivizing the target action has an interpretable format closely related to hypothesis testing. Recall the notions of sum-optimal and ratio-optimal statistical contracts from Definition~\ref{def:statistical_contract}; then:

\begin{lemma}[Min-budget optimality in uniform-cost settings]
\label{lem:truncated_cost_minimax_test}
For every contract design setting $\left(F,c^{\mathrm{const}}\right)$, 
the min-budget contract coincides with the sum-optimal statistical contract $B^*_{R}(c_n-c')\cdot \psi^*_{R}$, and the optimal budget is $(c_n-c')/(1-R^*)$.
\end{lemma}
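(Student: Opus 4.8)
The plan is to prove \Cref{lem:truncated_cost_minimax_test} by exhibiting a bijection between feasible $b$-free contracts and valid hypothesis tests for the uniform-cost setting, and checking that the min-budget objective transforms into the sum-risk objective. Concretely, write $b := c_n - c'$, so that every off-target cost difference is $c_n - c' = b$. I would start from the MIN-BUDGET LP \cref{eq:min_budget_lp} with target action $n$ and this particular cost vector, in which the (IC) constraint for every $i < n$ reads $\sum_j F_{ij} t_j - c' \le \sum_j F_{nj} t_j - c_n$, i.e.\ $\sum_j (F_{nj} - F_{ij}) t_j \ge b$. The key substitution (the same one used in the statistical LP \cref{eq:statistical_min_budget_lp}) is $t = B \cdot \psi$ with $\psi \in [0,1]^m$ and $B = \max_j t_j$, or more carefully $\psi = t / B$; one must handle the degenerate case $B = 0$, which cannot satisfy the IC constraints when $b > 0$, so $B > 0$ on feasible solutions.

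Under this substitution the IC constraint becomes $B \sum_j (F_{nj} - F_{ij}) \psi_j \ge b$, i.e.\ $\sum_j F_{ij}\psi_j + \sum_j F_{nj}(1-\psi_j) \le 1 - b/B$ for all $i<n$, using $\sum_j F_{nj} = 1$. The left-hand side is exactly $\FP_i + \FN = R_i(\psi)$ when $\psi$ is read as a test between $F_i$ (null) and $F_n$ (alternative). Taking the worst case over $i$, feasibility is equivalent to $R(\psi) \le 1 - b/B$, i.e.\ $B \ge b / (1 - R(\psi))$ (note $R(\psi) < 1$ is forced, consistent with the standing assumption). Hence minimizing $B$ over feasible contracts is equivalent to: choose $\psi$ minimizing $R(\psi)$, then set $B = b/(1 - R(\psi))$. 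The minimizer of $R(\psi)$ over $[0,1]^m$ is by definition the minimax sum-optimal test $\psi^*_R$ with risk $R^* = R(\psi^*_R)$, and the optimal budget is $b/(1 - R^*)$, which is precisely $B^*_R(b)$ from \Cref{def:statistical_contract}. The resulting optimal contract is $t = B^*_R(b) \cdot \psi^*_R$, i.e.\ the sum-optimal statistical contract.

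I would also need to reconcile two small points. First, \cref{eq:min_budget_lp} uses the constraint $t_j \le B$ rather than $B = \max_j t_j$; since decreasing $B$ only tightens feasibility and the objective is $B$, at optimum $B = \max_j t_j$, so writing $\psi = t/B$ is legitimate and lands in $[0,1]^m$. Second, one must confirm that the best-response/tie-breaking convention in \Cref{sub:prelim-text} (agent breaks ties toward the principal) makes the weak IC inequalities in the LP the right notion of ``implements action $n$'': this is exactly how the MIN-BUDGET LP was set up in \Cref{sub:prelim-contracts} and in \citep{saig2023delegated}, so I would simply cite that. The main obstacle, such as it is, is bookkeeping: making sure the direction of every inequality survives the reciprocal substitution $B \leftrightarrow 1/\beta$ and the sign flip $F_{nj} - F_{ij}$ vs.\ $F_{ij} - F_{nj}$, and correctly identifying $\FP_i + \FN$ with the transformed IC left-hand side. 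There is no deep difficulty — the lemma is essentially the observation that in a uniform-cost setting all $n-1$ IC constraints share the same cost gap $b$, so the contract LP collapses to a single scalar ($B$) times the minimax-test LP \cref{eq:minimax_hypothesis_test_lp}, and one reads off both the optimal test and the optimal budget $b/(1-R^*)$.
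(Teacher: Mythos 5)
Your proposal is correct and follows essentially the same route as the paper's proof: you substitute $\psi = t/B$, observe that in the uniform-cost setting every IC constraint collapses to $R_i(\psi) \le 1 - b/B$, and read off that minimizing $B$ amounts to minimizing the worst-case risk $R(\psi)$ and then setting $B = b/(1-R^*)$. The paper packages the same computation via the statistical LP \cref{eq:statistical_min_budget_lp} (with $\beta = 1/B$) followed by the affine change of variable $r = 1 - (c_n - c')\beta$ to land exactly on the minimax-test LP \cref{eq:minimax_hypothesis_test_lp}, but the underlying identification of transformed IC left-hand sides with $\FP_i + \FN$ and the resulting budget formula are identical to yours.
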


\begin{lemma}[Min-pay optimality in uniform-cost settings]
\label{lem:truncated_cost_minimax_test_ratio}
For every contract design setting $\left(F,c^{\mathrm{const}}\right)$, 
the min-pay contract coincides with the ratio-optimal statistical contract $B^*_{\rho}(c_n-c')\cdot \psi^*_{\rho}$, and the optimal expected total payment is $(c_n-c')/(1-{\rho}^*)$.
\end{lemma}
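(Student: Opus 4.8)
The plan is to prove Lemma~\ref{lem:truncated_cost_minimax_test_ratio} in close parallel to Lemma~\ref{lem:truncated_cost_minimax_test}, exploiting the special structure of $c^{\mathrm{const}}=(c',\dots,c',c_n)$, under which every non-target action has the \emph{same} cost. First I would write out the MIN-PAY LP from \cref{eq:min_pay_lp} for the target action $n$ with this cost vector: the (IC) constraints become $\sum_j F_{ij}t_j - c' \le \sum_j F_{nj}t_j - c_n$ for all $i<n$, i.e. $\sum_j (F_{nj}-F_{ij})t_j \ge c_n-c'$, with objective $\min \sum_j F_{nj}t_j$ subject to $t\ge 0$. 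The key observation is that the single uniform cost gap $c_n-c'$ appears identically on the right-hand side of every constraint, so the feasible region is a scaled version of a purely distributional region.

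The main step is the change of variables that turns this into the minimax ratio-optimal testing problem. Following the contract-to-test map $\psi^t = t/\Norm{t}_\infty$ is one option, but the cleaner route (mirroring the sum-optimal case) is to rescale by the objective value itself. Set $\beta = 1/\sum_j F_{nj}t_j$ (the reciprocal expected payment, well-defined and positive since the agent must be incentivized away from a strictly cheaper action, forcing $\TP>0$) and $\psi_j = \beta t_j$. Then $\psi\ge 0$, the objective $\min\sum_j F_{nj}t_j$ becomes $\max\beta$, and each (IC) constraint becomes $\sum_j(F_{nj}-F_{ij})\psi_j \ge (c_n-c')\beta$, i.e. $\TP - \FP_i \ge (c_n-c')\beta$ for all $i<n$. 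Maximizing $\beta$ subject to $\beta \le (\TP-\FP_i)/(c_n-c')$ for all $i<n$ is exactly maximizing $\min_{i<n}(\TP-\FP_i)/(c_n-c')$, i.e. minimizing $\max_{i<n}\FP_i/\TP \cdot(\dots)$ — more precisely, after noting the optimum is scale-invariant in $\psi$, I would argue that the optimal $\psi$ is the one minimizing $\rho(\psi)=\max_{i<n}\FP_i/\TP$, which is precisely $\psi^*_\rho$ from Definition~\ref{def:minimal_ratio_hypothesis_test}. One subtlety to handle carefully: the testing LP ranges over $\psi\in[0,1]^m$, whereas here $\psi$ is only constrained to be nonnegative; I would show the upper bound $\psi_j\le 1$ can be imposed WLOG by a normalization/truncation argument, or equivalently that $\rho$ is invariant under positive rescaling of $\psi$ so the constraint set $[0,1]^m$ loses nothing.

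With $\psi^*_\rho$ identified as the solution, I would back out the optimal payment. At optimality the binding constraint gives $\beta^* = (\TP(\psi^*_\rho)-\FP(\psi^*_\rho))/(c_n-c')$ wait — rather, $\beta^* = \min_{i<n}(\TP-\FP_i)/(c_n-c')$; using the composite definitions $\FP=\max_{i<n}\FP_i$ and $\rho^* = \FP(\psi^*_\rho)/\TP(\psi^*_\rho)$, this is $(\TP(\psi^*_\rho)-\FP(\psi^*_\rho))/(c_n-c') = \TP(\psi^*_\rho)(1-\rho^*)/(c_n-c')$. Hence the optimal expected total payment is $1/\beta^* = (c_n-c')/\big(\TP(\psi^*_\rho)(1-\rho^*)\big)$. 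Then I must reconcile this with the claimed value $(c_n-c')/(1-\rho^*)$ and with the claim that the min-pay contract equals the ratio-optimal statistical contract $B^*_\rho(c_n-c')\cdot\psi^*_\rho$ with $B^*_\rho = (c_n-c')/(\TP(\psi^*_\rho)-\FP(\psi^*_\rho))$ from Definition~\ref{def:statistical_contract}. Indeed $t^* = \psi^*_\rho/\beta^* = \psi^*_\rho\cdot(c_n-c')/(\TP(\psi^*_\rho)-\FP(\psi^*_\rho)) = B^*_\rho(c_n-c')\cdot\psi^*_\rho$, and then $\expect{j\sim F_n}{t^*_j} = B^*_\rho(c_n-c')\cdot\TP(\psi^*_\rho) = (c_n-c')\TP(\psi^*_\rho)/(\TP(\psi^*_\rho)-\FP(\psi^*_\rho)) = (c_n-c')/(1-\rho^*)$, as required.

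The main obstacle I anticipate is the handling of the $[0,1]^m$ box constraint on tests versus the unbounded-above feasible region for contracts — i.e., verifying that normalizing by the $\ell_\infty$-norm (or equivalently by the objective) genuinely lands in the testing LP's feasible set without changing the optimal ratio — together with the care needed to ensure $\TP(\psi^*_\rho)>0$ so that all divisions are legitimate; this last point should follow because any feasible contract must give the agent strictly positive utility gain from the target action relative to a strictly cheaper one, which forces $\TP-\FP_i>0$ hence $\TP>0$. Everything else is bookkeeping transferred verbatim from the sum-optimal proof of Lemma~\ref{lem:truncated_cost_minimax_test}, with ``$\FP+\FN$'' replaced by ``$\FP/\TP$'' and the additive budget replaced by the multiplicative one.
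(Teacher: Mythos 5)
Your proposal is correct and follows essentially the same route as the paper's proof: rescale $t$ by an auxiliary scalar $\beta$ to obtain a test $\psi$, reduce the MIN-PAY LP to the minimax ratio-optimal testing problem, and then unwind the scaling to obtain the expected payment $(c_n-c')/(1-\rho^*)$ and the contract $B^*_\rho(c_n-c')\cdot\psi^*_\rho$. The only difference is that you normalize by taking $\beta$ to be the reciprocal expected pay, which pins $\TP(\psi)=1$ and forces the extra scale-invariance argument to reconcile with the $\psi\in[0,1]^m$ box in Definition~\ref{def:minimal_ratio_hypothesis_test}, whereas the paper keeps $\beta$ as a free variable alongside $\psi\in[0,1]^m$ and eliminates it afterward, sidestepping that step.
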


Proofs appear in Appendix~\ref{appx:proofs-main-lemmas}, establishing also the other direction: %

\begin{observation}
Let $\left(F,c^{\mathrm{const}}\right)$ be a contract design setting. 
Then the minimax sum-optimal test among the composite hypothesis tests for $\Set{F_k}_{k=1}^{n-1}$, $\Set{F_n}$ is obtained by normalizing the min-budget contract, and the minimax ratio-optimal test is obtained by normalizing the min-pay contract.
\end{observation}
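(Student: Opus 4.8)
The plan is to derive the statement as the converse of \Cref{lem:truncated_cost_minimax_test,lem:truncated_cost_minimax_test_ratio} by pushing their conclusions through the contract-to-test map $\psi^t = t/\Norm{t}_\infty$ of \Cref{sub:tests2contracts}. Fix a uniform-cost setting $(F,c^{\mathrm{const}})$. If $t$ is a min-budget contract, then \Cref{lem:truncated_cost_minimax_test} gives $t = B^*_R(c_n-c')\cdot\psi^*_R$ for a minimax sum-optimal test $\psi^*_R$, so $\psi^t = t/\Norm{t}_\infty = \psi^*_R/\Norm{\psi^*_R}_\infty$; similarly, by \Cref{lem:truncated_cost_minimax_test_ratio} a min-pay contract satisfies $\psi^t = \psi^*_\rho/\Norm{\psi^*_\rho}_\infty$. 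The whole claim therefore reduces to showing that the relevant minimax tests can be taken with $\ell_\infty$-norm exactly $1$, i.e.\ $\Norm{\psi^*_R}_\infty = \Norm{\psi^*_\rho}_\infty = 1$.

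For the sum-optimal test I would argue by a short rescaling: set $M := \Norm{\psi^*_R}_\infty$; since the constant-$0$ test has sum-risk $1 > R^*$ we have $M > 0$, and if $M < 1$ then $\psi' := \psi^*_R/M$ is still a valid test in $[0,1]^m$, with $R_k(\psi') = 1 + (\FP_k(\psi^*_R) - \TP(\psi^*_R))/M$ for every $k\in[n-1]$. The standing assumption $R^* < 1$ forces $R_k(\psi^*_R) = \FP_k(\psi^*_R) + 1 - \TP(\psi^*_R) \le R^* < 1$, so $\FP_k(\psi^*_R) - \TP(\psi^*_R) < 0$; dividing this negative quantity by $M < 1$ strictly decreases it, hence $R_k(\psi') < R_k(\psi^*_R)$ for all $k$ and thus $R(\psi') < R^*$, contradicting minimaxity. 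So $\Norm{\psi^*_R}_\infty = 1$ and $\psi^t = \psi^*_R$. An arguably cleaner route for this half avoids the rescaling step by specializing the statistical LP \cref{eq:statistical_min_budget_lp} to $c^{\mathrm{const}}$: its constraints collapse to $R(\psi) \le 1 - (c_n-c')\beta$, so its optimal solutions are exactly the pairs $(\psi,\beta)$ with $\psi$ minimax sum-optimal; since the substitution $(t,B)\mapsto(t/B,1/B)$ carries a min-budget-optimal $(t,\Norm{t}_\infty)$ to such a pair, the coordinate $t/\Norm{t}_\infty$ is automatically a minimax sum-optimal test.

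For the ratio-optimal test the issue is of a different flavor: $\rho_k(\psi) = \FP_k/\TP$ is invariant under positive rescaling of $\psi$ inside $[0,1]^m$, so the minimax ratio-optimal test is pinned down only up to scaling, and the representative singled out by \Cref{def:statistical_contract} (equivalently, the one produced in the proof of \Cref{lem:truncated_cost_minimax_test_ratio} via the constant $B^*_\rho(\cdot)$) is precisely the one with $\Norm{\psi^*_\rho}_\infty = 1$. With that normalization $\psi^t = \psi^*_\rho$, completing the argument.

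I expect the main obstacle to be bookkeeping around non-uniqueness rather than anything deep: min-budget and min-pay contracts need not be unique, and minimax ratio-optimal tests are genuinely scale-ambiguous, so the observation must be read as ``normalizing \emph{any} optimal contract returns the $\ell_\infty$-normalized optimal test,'' and one has to confirm that this $\ell_\infty$-normalization agrees with the constants $B^*_R(\cdot)$ and $B^*_\rho(\cdot)$ fixed in \Cref{def:statistical_contract}. The only genuine (and very small) lemma needed is that the minimax sum-optimal test attains value $1$ in some coordinate, which is the rescaling argument above and relies on $R^* < 1$.
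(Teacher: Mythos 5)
Your proposal is correct, and your ``cleaner route'' is in fact the paper's argument: since the paper establishes that the MIN-BUDGET LP for $(F,c^{\mathrm{const}})$ is carried by the variable change $(\psi,\beta)=(t/B,1/B)$ onto the minimax test LP (and analogously for min-pay via the transformation in the proof of \Cref{lem:truncated_cost_minimax_test_ratio}), the bijection on optimal solutions immediately gives both directions of the correspondence, and the observation is presented as a byproduct of those lemma proofs rather than a separate result. Two small remarks on your primary route. First, the rescaling argument for $\Norm{\psi^*_R}_\infty=1$ is valid but unnecessary: at a MIN-BUDGET optimum some constraint $t_j\le B$ must be tight (else $B$ could be decreased), so $B^*=\Norm{t^*}_\infty$ and hence $\psi^*=t^*/B^*=t^*/\Norm{t^*}_\infty$ directly, with no detour through a contradiction. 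Second, the non-uniqueness worry in the ratio case is a non-issue because $B^*_\rho(b)\cdot\psi^*_\rho$ is invariant under replacing $\psi^*_\rho$ by any positive rescaling $\alpha\psi^*_\rho\in[0,1]^m$: both $\TP-\FP$ in the denominator of $B^*_\rho$ and $\psi^*_\rho$ itself scale by $\alpha$, so the product is unchanged, and normalizing the resulting (well-defined) contract always lands on the unique $\ell_\infty$-normalized minimax ratio-optimal test.
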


\subsection{Proofs of main lemmas}
\label{appx:proofs-main-lemmas}

\begin{proof}[Proof of Lemma~\ref{lem:dominating_cost_vector_feasible_region_inclusion}]
For target action $n$ and any alternative action $i<n$, the \ICConstraintName{} constraint of the MIN-BUDGET LP (\cref{eq:min_budget_lp}) is given by:
$$
\sum_j F_{ij} t_j - c_{i} \le \sum_j F_{nj} t_j - c_{n}.
$$
Rearranging the terms yields:
\begin{equation}
\label{eq:inclusion_proof_min_budget_ic_constraint}
\sum_j \left(F_{ij} - F_{nj}\right) t_j \le c_i - c_{n}.
\end{equation}
Let $\overline{c}$ be a cost vector satisfying $c\le \overline{c}$ and $c_n=\overline{c}_n$. 
The costs $c_k$ are assumed to be increasing in $k$, and therefore ${\overline{c}}_i - {\overline{c}}_n = {\overline{c}}_i - c_n < 0$ for all $i$.
Moreover, as $c_i\le {\overline{c}}_i$ for all $i<n$, the RHS of \cref{eq:inclusion_proof_min_budget_ic_constraint} satisfies:
$$
\underbrace{c_i-c_n}_{\text{RHS of $(F,c)$ LP}}
\le
\underbrace{{\overline{c}}_{i}-\overline{c}_n}_{\text{RHS of $(F,{\overline{c}})$ LP}}
<
0.
$$
Hence, the \ICConstraintName{} constraints of the $\left(F,c\right)$ contract design problem are more restrictive than the \ICConstraintName{} constraints of the $\left(F,\overline{c}\right)$ design problem.
Since the design problems
$
\left(F,c\right),
\left(F,\overline{c}\right),
$
only differ in the RHS of the \ICConstraintName{} constraints, the sets of feasible solutions satisfy the desired inclusion relation.
\end{proof}

\begin{proof}[Proof of Lemma~\ref{lem:truncated_cost_minimax_test}]
Under the ``statistical'' variable transformation $(t,B)\mapsto (\psi/\beta, 1/\beta)$, the MIN-BUDGET LP for $(F,c^{\mathrm{const}})$ is given by \cref{eq:statistical_min_budget_lp}:

\begin{equation*}
\begin{aligned}
\max_{\psi\in\UnitInterval^m, \beta\in\NonNegativeReals} \quad
& \beta
\\
\mathrm{s.t.} \quad
& \sum_j F_{ij} \psi_j + \sum_j F_{nj} (1-\psi_j) \le 1 - (c_n-c') \beta
& \quad \forall i<n
\end{aligned}
\end{equation*}

Applying the variable transformation $r = 1-(c_n-c')\beta$ yields the following equivalent LP:
\begin{equation*}
\begin{aligned}
\min_{\psi\in\UnitInterval^m, r\in\NonNegativeReals} \quad
& r
\\
\mathrm{s.t.} \quad
& \sum_j F_{ij} \psi_j + \sum_j F_{nj} (1-\psi_j) \le r
& \quad \forall i<n
\end{aligned}
\end{equation*}
This LP is equivalent to the minimax sum-optimal test $\psi^*_R$ in \cref{eq:minimax_hypothesis_test_lp}, and therefore the optimal solution $\psi^*$ is precisely this test. 
By the same equivalence, the optimal value of the optimization parameter $r$ satisfies $r^*=R^*$, where $R^*$ is the minimax risk of the testing problem (i.e., the risk of $\psi^*_R$). By construction, the optimal $\beta$ satisfies $\beta^*=\frac{1-r^*}{c_n-c'}=\frac{1-R^*}{c_n-c'}$, and therefore the minimal budget is $B^*=\frac{1}{\beta^*}=\frac{c_n-c'}{1-R^*}$ which in the notation of Definition~\ref{def:statistical_contract} is $B^*_{R}(c_n-c')$.
Reversing the variable transformation we get $t^*=\psi^*/\beta^*=B^*\cdot \psi^*$, which is equal to the sum-optimal statistical contract $B^*_{R}(c_n-c')\cdot \psi^*_{R}$, as required.
\end{proof}

\begin{proof}[Proof of Lemma~\ref{lem:truncated_cost_minimax_test_ratio}]

For the min-pay contract design problem, introduce an auxiliary variable $\beta>0$, and define a ``statistical'' variable transformation $t \mapsto \frac{c_n-c'}{\beta}\psi$, where $\psi\in[0,1]^m$. The MIN-PAY LP (\cref{eq:min_pay_lp}) transforms into:

\begin{equation}
\label{eq:min_pay_lp_transformed_variable}
\begin{aligned}
\min_{\psi\in[0,1]^m, \beta>0} \quad
& \frac{c_n-c'}{\beta}\sum_j F_{nj} \psi_j
\\
\mathrm{s.t.} \quad
& \beta \le \sum_j F_{nj} \psi_j - \sum_j F_{kj} \psi_j
& \quad \forall k \in [n-1]
\end{aligned}
\end{equation}

For any given $\psi$, the optimal value of $\beta$ is:
\begin{equation*}
\begin{aligned}
\beta^*&=\min_{k\in[n-1]} \left(\sum_j F_{nj} \psi_j - \sum_j F_{kj} \psi_j\right)
\\&=\sum_j F_{nj} \psi_j - \max_{k\in[n-1]} \sum_j F_{kj} \psi_j.
\end{aligned}
\end{equation*}
Therefore, \cref{eq:min_pay_lp_transformed_variable} is equivalent to:
\begin{equation}
\label{eq:min_pay_lp_transformed_beta_elimitation}
\begin{aligned}
\min_{\psi\in[0,1]^m} \quad
& (c_n-c')\frac{\sum_j F_{nj} \psi_j}{\sum_j F_{nj} \psi_j - \max_k \sum_j F_{kj} \psi_j}.
\end{aligned}
\end{equation}
Divide the numerator and the denominator by $\sum_j F_{nj} \psi_j$ to obtain the transformed objective:
\begin{equation*}
\label{eq:min_pay_lp_transformed_objective}
\begin{aligned}
\frac{\sum_j F_{nj} \psi_j}{\sum_j F_{nj} \psi_j - \max_{k\in[n-1]} \sum_j F_{kj} \psi_j}
&= \frac{1}{1 - \max_{k\in[n-1]} \frac{\sum_j F_{kj} \psi_j}{\sum_j F_{nj} \psi_j}}
\\
&= \frac{1}{1 - \max_{k\in[n-1]} \rho_k(\psi)}.
\end{aligned}
\end{equation*}
And hence \cref{eq:min_pay_lp_transformed_beta_elimitation} can be written compactly as:
\begin{equation}
\label{eq:min_pay_lp_transformed_ratio_objective}
\begin{aligned}
\min_{\psi\in[0,1]^m} \quad
& \frac{c_n-c'}{1 - \max_{k\in[n-1]} \rho_k(\psi)}.
\end{aligned}
\end{equation}
The optimal solution for \cref{eq:min_pay_lp_transformed_ratio_objective} is the minimizer of $\max_{k\in[n-1]} \rho_k(\psi)$, which is equivalent to the minimax ratio-optimal test $\psi^*_{\rho}$ by \Cref{def:minimal_ratio_hypothesis_test}. 
The optimal expected pay is $\frac{c_n-c'}{1-\rho^*}$, and the optimal contract is given by:
$$
t^*=\frac{c_n-c'}{\beta^*} \psi^* 
= \frac{c_n-c'}{\TP(\psi^*_{\rho})-\FP(\psi^*_{\rho})} \psi^*_\rho.
=B^*_{\rho}(c_n-c')\cdot \psi^*_{\rho},
$$
where $B^*_{\rho}(\cdot)$ is as in Definition~\ref{def:statistical_contract}.
We conclude that $t^*$ is the ratio-optimal statistical contract, as required.
\end{proof}

\subsection{Proof of main theorem}
\label{appx:proof-main-theorem}

We are now ready to prove our main theorem:

\begin{proof}[Proof of \Cref{thm:main}]
We prove the first half of the theorem, i.e., that the sum-optimal statistical contract is $b$-cost-robust and has the lowest budget $b/(1-R^*)$ among all $b$-cost-robust contracts. The second half of the theorem follows by swapping \Cref{lem:truncated_cost_minimax_test} with \Cref{lem:truncated_cost_minimax_test_ratio}.

We first show that the sum-optimal statistical contract is $b$-cost-robust, i.e., incentivizes the target action for every cost vector in the ambiguity set $\mathcal{C}$: 
Define $c^0:=(0,\dots,0,b)$. By \Cref{lem:truncated_cost_minimax_test}, the min-budget contract for the setting $(F,c^0)$ is the sum-optimal statistical contract $t^*_R$, i.e., %
$B^*_{R}(c_n-c')\cdot \psi^*_{R}$, 
where $\psi^*_R$ is the minimax sum-optimal composite test for distribution sets $\Set{F_i}_{i\in[n-1]}$, $\Set{F_n}$. 
Its budget is $(b-0)/(1-R^*)=b/(1-R^*)$.

In particular, $t^*_R$ incentivizes the target action and so belongs to $\mathcal{T}_{(F,c^0)}$. 
Observe that any increasing cost vector $\overline{c}$ with $\overline{c}_n=b$ dominates the cost vector $c^0$,
and therefore by \Cref{lem:dominating_cost_vector_feasible_region_inclusion}, contract $t^*_R$ also belongs to $\mathcal{T}_{(F,\overline{c})}$ for any such cost vector $\overline{c}$ that dominates $c^0$. 
Furthermore, any cost vector $c$ in the ambiguity set $\mathcal{C}$ has a corresponding cost vector $\overline{c}$ in which all costs are identical except for $c_n\le \overline{c}_n=b$. Lowering the target action's cost from $\overline{c}_n$ to $c_n$ can only help incentivize it, thus we conclude that $t^*_R \in \mathcal{T}_{(F,c)}$, as required.

We now show optimality of the budget: 
Since $t^*_R$ is the min-budget contract for the setting $(F,c^0)$, and since
$c^0$ is within the ambiguity region $\mathcal{C}$, it holds that any $b$-cost-robust contract $t$ must satisfy $B_t \ge b/(1-R^*)$. As $t^*_R$ satisfies this bound exactly, it has the lowest budget among all $b$-cost-robust contracts.
\end{proof}

\subsection{Proof of properties of optimal cost-robust contracts}
\label{appx:proofs-properties}

\begin{proof}[Proof of \Cref{thm:minimax_min_budget_approximation}]
Let $c^{c_n-a} = (c_n-a,\dots,c_n-a,c_n)$ and $c^{c_n-b} = (c_n-b,\dots,c_n-b,c_n)$ be two uniform-cost profiles; for brevity we refer to these as $c^{-a},c^{-b}$.
Since in contract setting $(F,c)$ it holds that $(c_n-c_i)\in[a,b]$ for all $i\in[n-1]$, we have that $c^{-b}_i\le c_i\le c^{-a}_i$
for all $i$. Thus by Lemma~\ref{lem:dominating_cost_vector_feasible_region_inclusion} it holds that 
\begin{equation}
    \FeasibleContracts_{(F,c)} \subseteq \FeasibleContracts_{(F,c^{-a})}, \label{eq:feasible-subset}
\end{equation}
that is, any contract that incentivizes the target action in setting $(F,c)$ will incentivize it also in setting $(F,c^{-a})$.

Consider the min-budget contracts for settings $(F,c^{-a}),(F,c),(F,c^{-b})$. Denote their budgets by $B^*_{(F,c^{-a})},B^*_{(F,c)},B^*_{(F,c^{-b})}$, respectively. We deduce from \Cref{eq:feasible-subset} that 
\begin{equation}
B^*_{(F,c)}\ge B^*_{(F,c^{-a})}, \label{eq:budget-LB}
\end{equation}
since the min-budget contract for $(F,c)$ is feasible for $(F,c^{-a})$.
Now recall that \Cref{lem:truncated_cost_minimax_test} gives us an expression for the optimal budgets of  the two uniform-cost settings.
This expression depends on the risk $R^*$ of the minimax sum-optimal hypothesis test for $\Set{F_k}_{k=1}^{n-1}$, $\Set{F_n}$, which is static across the two settings. It also depends on the difference between the highest and lowest cost in each setting. Thus:
\begin{equation}
    \frac{B^*_{(F,c^{-a})}}{B^*_{(F,c^{-b})}}=\frac{a}{b}.\label{eq:a-b}
\end{equation}
Combining \Cref{eq:budget-LB} and \Cref{eq:a-b} we get $\frac{b}{a} B^*_{(F,c)}\ge B^*_{(F,c^{-b})}$.
We conclude that the min-budget contract for setting $(F,c^{-b})$ is a $\frac{b}{a}$-min-budget contract for $(F,c)$. By \Cref{lem:truncated_cost_minimax_test} the min-budget contract is the sum-optimal statistical contract $\frac{b}{1-R^*}\psi^*_R$, which by Theorem~\ref{thm:main} is the $b$-cost-robust contract with the lowest budget, as required.

We now turn to the claim of tightness. Consider the following contract design setting:
\begin{equation*}
\begin{aligned}
F_1&=(1,0) \\
F_2&=(\varepsilon,1-\varepsilon) \\
F_3&=(0,1) \\
\end{aligned}
\end{equation*}
Where costs are increasing $c_1<c_2<c_3$, and $\varepsilon$ is a parameter satisfying:
\begin{equation}
\label{eq:bound_tightness_epsilon}
\varepsilon < \frac{c_3-c_2}{c_3-c_1}.
\end{equation}
The target distribution $F_3$ is only supported on $j=3$, and therefore the minimax sum-optimal test for this setting is:
$$
\psi^* = (0,1).
$$
As $F_1,F_3$ do not overlap, the minimax risk is given with respect to $F_2$ by the Neyman-Pearson Lemma~\cite{rigollet2015high}:
$$
R=1-\TV{F_2}{F_3} = 1-\varepsilon,
$$
and therefore the approximate contract given by \Cref{thm:minimax_min_budget_approximation} is:
$$
t
= \frac{c_n-c_1}{1-R}\psi^*
= \left(0,\frac{c_3-c_1}{\varepsilon}\right).
$$

As for the optimal contract, it  satisfies $t^*=(0,B^*)$ because the target distribution is only supported on $j=2$, and it has a threshold form due to \citep[Lemma 4]{saig2023delegated}. The optimal budget is:

$$
B^* = \max\Set{c_3-c_1, \frac{c_3-c_2}{\varepsilon}}
$$

When $\varepsilon$ satisfies \Cref{eq:bound_tightness_epsilon}, the optimal budget is $B^*=\frac{c_3-c_2}{\varepsilon}$, and therefore:
\begin{equation*}
\begin{aligned}
\Norm{t}_\infty
&= \frac{c_3-c_1}{\varepsilon} \\
&= \frac{c_3-c_1}{c_3-c_2} \cdot \underbrace{\frac{c_3-c_2}{\varepsilon}}_{=B^*} \\
&= \frac{c_n-c_1}{c_n-c_{n-1}}\Norm{t^*}_\infty
\end{aligned}
\end{equation*}
as required.
\end{proof}

\begin{remark}[Extending the proof of \Cref{thm:minimax_min_budget_approximation} to cost-robust min-pay contracts]
By \Cref{lem:truncated_cost_minimax_test_ratio}, the min-pay contracts for the design settings $(F,c^{-a}), (F,c^{-b})$ defined in the proof depend linearly on $a,b$, respectively, and thus their expected pay is also linear in the bounds. The inclusion argument in \cref{eq:feasible-subset} holds for min-pay contracts as well, and thus an argument analogous to \cref{eq:a-b} can be constructed for the ratio of expected payments. The rest of the proof follows similarly.
\end{remark}

\begin{proof}[Proof of \Cref{pro:optimal_truncated_cost_contract_linear_combination}]
By Theorem~\ref{thm:main} and Lemma~\ref{lem:truncated_cost_minimax_test}, the $b$-cost-robust contract with minimum budget is the min-budget contract for setting $(F,c^{\mathrm{const}})$ where $c^{\mathrm{const}}=(0,\dots,0,b)$.
For this setting, plugging variables $\tilde{\lambda}$, $\tilde{\mu}$ into the dual in \cref{eq:min_budget_statistical_lp_dual}, we get:
\begin{equation*}
\begin{aligned}
\min_{
\tilde{\lambda}\in\NonNegativeReals^{n-1},
\tilde{\mu}\in\NonNegativeReals^m} \quad
& \sum_{j\in[m]} \tilde{\mu}_j
\\
\mathrm{s.t.} \quad
&
\sum_{i<n} \left(F_{n,j}-F_{i,j}\right)\tilde{\lambda}_i \le \tilde{\mu}_j
& \quad \forall j \in [m]
\\
&
\sum_{i<n} b \tilde{\lambda}_i
\ge 1
\end{aligned}
\end{equation*}
Define the following variable transformation:
\begin{equation*}
\lambda = b\tilde{\lambda};~~~\mu = b\tilde{\mu}.
\end{equation*}
Under this transformation, the dual LP is equivalent to:
\begin{equation*}
\begin{aligned}
\min_{
{\lambda}\in\NonNegativeReals^{n-1},
{\mu}\in\NonNegativeReals^m} \quad
& \sum_{j\in[m]} {\mu}_j
\\
\mathrm{s.t.} \quad
&
\sum_{i<n} \left(F_{n,j}-F_{i,j}\right){\lambda}_i \le {\mu}_j
& \quad \forall j \in [m]
\\
&
\sum_{i<n} {\lambda}_i
\ge 1
\end{aligned}
\end{equation*}
When the contract is implementable, the optimal solution to the primal statistical LP (\cref{eq:statistical_min_budget_lp}) satisfies $\beta^*>0$, corresponding to the last constraint in the dual LP. Therefore, the last constraint of the dual is tight due to complementary slackness:
$$
\sum_{i<n} \lambda_i
= 1,
$$
and the LP is equivalent to:
\begin{equation*}
\begin{aligned}
\min_{
\lambda\in\DistOver{[n-1]},
\mu\in\NonNegativeReals^m} \quad
& \sum_{j\in[m]} \mu_j
\\
\mathrm{s.t.} \quad
&
 F_{n,j}-\sum_{i<n} \lambda_i F_{i,j} \le \mu_j
& \quad \forall j \in [m]
\end{aligned}
\end{equation*}
As $\mu\ge 0$ we can write:
\begin{equation*}
\begin{aligned}
\min_{
\lambda\in\DistOver{[n-1]}} \quad
& \sum_{j\in[m]} \left(F_{n,j}-\sum_{i<n} \lambda_i F_{i,j}\right)^+,
\end{aligned}
\end{equation*}
and by the definition of total variation distance (e.g by \citep[Claim 4]{saig2023delegated}), the optimization objective satisfies:
$$
\sum_{j\in[m]} \left(F_{n,j}-\sum_{i<n} \lambda_i F_{i,j}\right)^+
=
\TV{F_n}{\sum_i \lambda_i F_i}
$$
Applying the inverse transformation yields:
\begin{equation*}
\begin{aligned}
\sum_{j\in[m]} \tilde{\mu}^*_j 
&=
\frac{1}{b} \sum_{j\in[m]} \mu^*_j
\\
&=
\frac{\min_{\lambda\in\DistOver{[n-1]}} \TV{F_n}{\sum_{i<n} \lambda_i F_i}}{b}.
\end{aligned}
\end{equation*}
Then, by strong LP duality $\beta^*=\sum_j \tilde{\mu}^*_j$, and the final result is obtain by applying the nonlinear variable transformation $B^*=\frac{1}{\beta^*}$.
This gives the desired expression for the minimum budget $B^*$ of a $b$-cost-robust contract:
$\max_{\lambda\in\DistOver{[n-1]}} b/\TV{F_n}{\sum_{i<n} \lambda_i F_i}.$
\end{proof}

\subsection{MLR}
\label{appx:proofs-mlr}
In this section, we prove that cost-robust min-budget contracts for distributions satisfying the Monotone Likelihood Ratio (MLR) property have a threshold functional form.

\begin{proof}[Proof of \Cref{pro:MLR-threshold}]
Let $(F,c)$ be a contract design setting with $c_n-c_a\le b$, 
such that $F$ satisfies monotone likelihood ratio. By the Karlin-Rubin theorem, the hypothesis test for the composite hypotheses $\Set{F_k}_{k=1}^{n-1}$, $\Set{F_n}$ minimizing $\FP+\FN$ is a threshold function, and therefore there exists $j_0\in[m]$ such that $\psi^*(j)=\Indicator{j\ge j_0}$.
By \Cref{thm:main}, the optimal contract in this case is of the form $t^*=B\psi^*$ for some scalar $B>0$, and therefore $t^*$ is a threshold contract.
\end{proof}

\subsection{Two-outcome settings}
\label{appendix:supporting_proofs_for_experiments}

\begin{proposition}
\label{claim:two_outcome_feasible_contract_has_zero_payment_for_failures}
Let $(F,c)$ be a two-outcome contract design setting ($m=2$).
A contract $t=(t_0,t_1)$ with $t_1\ge t_0$ implements the target action if and only if the contract $t'=(0,t_1-t_0)$ implements the target action.
\end{proposition}
\begin{proof}
For any action $i\in[n-1]$, the corresponding \ICConstraintName{} constraint is:
\begin{equation}
\label{eq:two_outcome_contract_ic_constraint}
\sum_{j\in\Set{0,1}} f_{i,j} t_j - c_i
\le
\sum_{j\in\Set{0,1}} f_{n,j} t_j - c_n
\end{equation}
As $f_i$, $f_n$ are probability distributions, the following holds for any $t_0$:
\begin{equation}
\label{eq:two_outcome_contract_equal_subtract}
\sum_{j\in\Set{0,1}} f_{i,j} t_0 = \sum_{j\in\Set{0,1}} f_{n,j} t_0
\end{equation}
Subtracting \cref{eq:two_outcome_contract_equal_subtract} from \cref{eq:two_outcome_contract_ic_constraint} does not change the \ICConstraintName{} constraint, as both sides of \cref{eq:two_outcome_contract_equal_subtract} are equal. 
Performing the subtraction and rearranging the terms gives:
$$
\sum_{j\in\Set{0,1}} f_{i,j} (t_j-t_0) - c_i
\le
\sum_{j\in\Set{0,1}} f_{n,j} (t_j-t_0) - c_n
$$
which is equivalent to:
$$
f_{i,1} (t_1-t_0) - c_i
\le
f_{n,1} (t_1-t_0) - c_n
$$
and this is the \ICConstraintName{} constraint for the contract $t'=(0,t_1-t_0)$. Therefore, the contract $t$ is feasible if and only if the contract $t'$ is feasible.
\end{proof}

\begin{proposition}
\label{claim:two_outcome_optimal_contract_has_zero_payment_for_failures}
Let $(F,c)$ be a two-outcome contract design setting ($m=2$), and let $t=(t_0,t_1)$.
Then the contract $t'=(0,t_1-t_0)$ has weakly-better expected pay, weakly-better budget requirements, and the same variance as $t$.
\end{proposition}
\begin{proof}
For the min-pay objective, we obtain from linearity of expectation:
$$
\expect{j\in f_n}{t_j-t_0} \le \expect{j\in f_n}{t_j}
$$
and therefore $t'$ has weakly-better expected pay.
For the min-budget objective, it holds that :
$$
\max\Set{0,t_1-t_0}\le \max\Set{t_0,t_1}
$$
and therefore $t'$ has weakly-better budget requirement.
for the min-variance objective, adding a constant to a random variable does not affect its variance:
$$
\Var(t)=\Var(t')
$$
and therefore $t'$ has the same variance as $t$.
\end{proof}

\begin{proof}[Proof of \Cref{proposition:binary_outcome_optimal_contracts_are_identical}]
For all $i\in [n]$, let $F_i=\mathrm{Bernoulli}(p_i)$. As $m=2$, any contract $t$ is a two-dimensional vector. By \cref{claim:two_outcome_feasible_contract_has_zero_payment_for_failures}, \cref{claim:two_outcome_optimal_contract_has_zero_payment_for_failures}, it holds that the optimal contract is of the form $t^*=(0,t_1^*)$ for any of the three objectives.
To prove that the optimal payment $t^*_1$ is the same for all objectives, observe that all objective functions are monotonically-increasing in $t_1^*$:
\begin{equation*}
\begin{aligned}
\max{t^*}&=t^*_1
&\text{(Required budget)}
\\
\expect{j\sim f_n}{t^*}&=f_{n,1} t^*_1
&\text{(Expected pay)}
\\
\Var(t^*)&=f_{n,1} (1-f_{n,1}) \left(t^*_1\right)^2
&\text{(Variance)}
\end{aligned}
\end{equation*}
Since all the optimization problems are of a single variable with identical \ICConstraintName{} constraints, their optimal solutions are all identical.
\end{proof}

\subsection{Hardness of all-or-nothing cost-robust contracts}
\label{subsec:all_or_nothing_hardness}
While cost-robust contracts can be computed in polynomial time by solving the corresponding convex programs (see \Cref{subsec:linear_programs_and_equivalent_forms}), restricting the functional form of the solution to have only two levels of payment entails computational hardness:

\begin{definition}[All-or-nothing contract; {\citep{saig2023delegated}}]
A contract $t$ has an \emph{all-or-nothing} functional form if there exists $B>0$ such that $t_j\in\Set{0,B}$ for all $j\in[m]$.
\end{definition}

In \citep[Theorem 3]{saig2023delegated}, it is shown that computing a min-budget all-or-nothing contract is NP-hard. We show that the same reduction is applicable for min-budget cost-robust contracts:

\begin{proposition}[Hardness]
Computing a min-budget all-or-nothing contract is NP-hard.
\end{proposition}
\begin{proof}
By reduction from 3SAT. Given a 3-CNF formula, we show that there exists a cost-robust contract design problem such that the required budget of an all-or-nothing cost-robust min-budget contract is below a certain threshold if and only if the formula is satisfyiable.

First, given a 3-CNF formula, apply the polynomial-time reduction described in \citep[Appendix B.5.3]{saig2023delegated} to construct a min-budget contract design problem $(F, c)$.
Denote the target action of the design problem by $n\in[n]$. By the construction described in \citep[Equation (27)]{saig2023delegated}, it holds that $c_n=1$ and $c_i=0$ for all $i<n$. Thus, the contract design problem tightly satisfies the cost difference upper bound $c_n-c_i\le 1$. 
Moreover, since all alternative actions have the same cost $c_i=0$, the cost vector $c$ it is also the worst-case cost vector in the cost uncertainty set induced by the bound $b$ (by \Cref{lem:dominating_cost_vector_feasible_region_inclusion}). Therefore, $(F,c)$ is also a cost-robust contract design setting for the bound $c_n-c_i\le1$.

By the proof of \citep[Theorem 3]{saig2023delegated},
there exists a threshold $B_0$ such that the required budget of an all-or-nothing min-budget contract in the design setting $(F,c)$ is below a certain threshold if and only if the 3-CNF formula is satisfiable. 
\end{proof}

\section{Experiments/Empirical Evaluation}
\label{app:experiments}
\subsection{Inference Costs}
\label{app:experiments_inference_costs_details}

To calculate the costs for each model, we use energy data from the Hugging Face LLM Performance Leaderboard\footnote{\url{https://huggingface.co/spaces/optimum/llm-perf-leaderboard}.}.
The energy efficiency for each model is expressed in the leaderboard in units of output tokens per kWH. To convert to actionable costs we assume a rate of  .105 \$/kWH, aligning with conservative energy costs in the United States and giving us order-of-magnitude approximations of the actual inference costs. The inference costs are presented in units of \$/1M~tokens. 
The leaderboard data was taken from the experiments on the A100 GPUs. For each model, we took the \ModelName{GPTQ-4bit+exllama-v2} quantization benchmark.
\Cref{tab:llama_energy_costs} shows the energy costs on the leaderboard for Llama2 and CodeLlama. We note that energy data was missing for \ModelName{CodeLlama-70B}, so we extrapolated from \ModelName{Llama2-70B-chat}.
    
    \begin{table}
        \centering
        \begin{tabular}{ccc}
        \toprule
           \thead{ Model size} & \thead{Llama2 cost  \\ (\$/1M tokens)}   & \thead{CodeLlama cost  \\ (\$/1M tokens)}   \\
            \midrule
            7B & \$0.182 & \$0.183 \\
            13B& \$0.24 & \$0.24  \\
            70B& \$0.64 & \emph{\$0.64}\\
     \bottomrule
        \end{tabular}
        \caption{Estimated energy costs for the Llama2 and CodeLlama model families, according to the methodology described in \Cref{app:experiments_inference_costs_details}.}
        \label{tab:llama_energy_costs}
    \end{table}

\paragraph{Model verbosity.}
Calculation of the per-token inference costs are not complete without an analysis of the response length produced by the various models. \Cref{tab:model_verbosity} shows the average verbosity (output length) of the 3 Llama models on the single-turn prompts in the MT-bench evaluation set. Since the values are of the same order of magnitude, we simplify and assume that the choice of model does not influence the verbosity, and therefore we do not include this in our cost calculations.
\begin{table}
    \centering
    \begin{tabular}{c c}
    \toprule
        \thead{Model} & Verbosity \\
        \midrule
        \ModelName{Llama-2-7B-chat} & 1625 \\
    \ModelName{Llama-2-13B-chat} & 1573 \\
    \ModelName{Llama-2-70B-chat} & 1695 \\
        \bottomrule
    \end{tabular}
    \caption{Model verbosities (average output length) of the models in consideration. Since the values are of the same order of magnitude, we assume for simplification that the choice of model does not significantly affect verbosity (see \Cref{app:experiments_inference_costs_details}).}
    \label{tab:model_verbosity}
\end{table}

\paragraph{Current market pricing schemes}
As described in \Cref{sec:intro}, current market pricing schemes for LLM generation involve \emph{pay-per-token} rates for which the user pays regardless of the response quality. For open-source models such as \ModelName{Llama2}, there exist API services to run model inference, such as AWS and Microsoft Azure.
In other scenarios, some pricing schemes behave as threshold contracts: an unsatisfied user may request from the API to regenerate a response free of charge, and hence will only pay if the response quality is above some threshold. For this reason threshold contracts can offer a ``satisfaction guarantee'' while retaining a simple form.

\subsection{ Multi-outcome contracts: Further Analysis}
\paragraph{Non-Monotone Contracts}
\label{app:non_monotone_contracts}
\Cref{tab:cost_aware_vs_cost_robust_non_monotone} shows the statistics of the various contract objectives in contracts when optimized without a monotonicity constraint, and displays how they match up to each other in cost-aware and cost-robust settings. We observe that the min-pay contract minimizes expected pay at the expense of high budget and variance.
The min-budget contract, on the other hand, is not the worst in any of the objectives.
 Additionally, the cost-robust setting sacrifices only a marginal increase in objective values: a $6.9\%$ increase in the Min-pay objective, an $8.7\%$ increase if optimizing for budget, and $6.5\%$ increase when optimizing for minimum variance. 
\begin{table}
    \centering
    \begin{tabular}{ c c c c  c c c }
    \toprule
          & \multicolumn{3}{c}{\textbf{Cost-aware}} &  \multicolumn{3}{c}{\textbf{Cost-robust}} \\
          \cmidrule(r){2-4} \cmidrule(l){5-7}
           & $\expect{}{t}$ & $\max t_j$ & $\mathrm{stdev}(t)$  &   $\expect{}{t}$ &$\max t_j$ & $\mathrm{stdev}(t)$\\  
          \cmidrule(r){2-4} \cmidrule(l){5-7}
          Min-Pay & \textbf{0.86}& 73.4 & 7.63  & 0.92 (+6.9\%) & 73.4 &8.16 \\
          Min-Budget & 
          2.48 & \textbf{3.59}& 1.60 & 2.78 &3.91 (+8.7\%) &1.70 \\
          Min-Variance & 3.52 & 6.31& \textbf{1.45} & 3.84 &6.58 & 1.53 (+6.5\%) \\
         \bottomrule
    \end{tabular}
    \caption{Cost-aware vs Cost-robust contracts in the non-monotone setting. The numbers in bold denote the optimal values achieved for the 3 objectives. The percentages denote the relative increase from the optimal that the cost-robustness sacrifices in each setting.
    }
\label{tab:cost_aware_vs_cost_robust_non_monotone}
\end{table}
\paragraph{Price of monotonicity}
It is of interest to analyze the relative difference in resulting contracts that occurs due to removing the monotonicity constraint.  
\Cref{tab:cost_robust_non_monotone_vs_monotone} shows the 
discrepancy in contract objectives for cost-robust contracts. We can observe that while the monotone contracts as a whole are simpler, more intuitive, and closely resemble threshold contracts, it is not without cost as they suffer a sizeable increase in contract objectives, most notably an increase of 388\% when trying to minimize expected pay.

\begin{table}
    \centering
    \begin{tabular}{ c c c c  c c c }
    \toprule
          & \multicolumn{3}{c}{\textbf{Non-Monotone}} &  \multicolumn{3}{c}{\textbf{Monotone}} \\
          \cmidrule(r){2-4} \cmidrule(l){5-7}
           & $\expect{}{t}$ & $\max t_j$ & $\mathrm{stdev}(t)$  &   $\expect{}{t}$ &$\max t_j$ & $\mathrm{stdev}(t)$\\  
          \cmidrule(r){2-4} \cmidrule(l){5-7}
          Min-Pay & \textbf{0.92} & 73.4 &8.16 & 4.82& 12.23 & 5.98  \\
          Min-Budget 
         & 2.78 &\textbf{3.91}  &1.70 & 5.48 & 6.63& 1.83  \\
         Min-Variance & 3.84 &6.58 & \textbf{1.53}  & 5.48 & 6.63 & 1.83  \\
         \bottomrule
    \end{tabular}
    \caption{Cost-robust contracts, monotone vs. non-monotone setting. The numbers in bold denote the optimal values achieved for the 3 objectives.
    }
\label{tab:cost_robust_non_monotone_vs_monotone}
\end{table}

\subsection{Implementation details}
\label{sub:implementation_details}
\paragraph{Code.} We implement our code in Python. Our code relies on \texttt{cvxpy} \citep{diamond2016cvxpy,agrawal2018rewriting} and \texttt{Clarabel} \citep{Clarabel_2024} for solving linear and quadratic programs. \\
Code is available at: \CodeURL{}. 
\paragraph{Hardware.}
All experiments were run on a single Macbook Pro laptop, with 16GB of RAM, and M2 processor, and with no GPU support. Overall computation time is approximately one minute. 
\paragraph{Implementation of cost-robustness.}
To implement cost-robustness of a contract setting with costs $(c_1,c_2,\dots,c_n)$, we assume knowledge of only the \emph{range} of costs, and calculate the contract using costs $(0,0,\dots,c_n-c_1)$. This modeling assumption provides us with the flexibility of solving contracts in settings without full-information while maintaining the approximation guarantee set forth in \Cref{thm:minimax_min_budget_approximation}. 

\subsubsection{Contract design solvers}
To compute optimal contracts, we implemented the following solvers:
\begin{itemizecompact}
\item  \textbf{Convex programming solvers:} Given outcome distributions $\{f_i\}$ and costs $\{c_i\}$, 
we solve the MIN-PAY LP (\cref{eq:min_pay_lp}),
the MIN-BUDGET LP (\cref{eq:min_budget_lp}), 
and the 
MIN-VARIANCE QP (\cref{eq:min_variance_lp}) 
using  \texttt{cvxpy}. All optimization programs enforce incentive compatibility (IC) constraints for the target action $n$ against all other actions $i \in [1,n-1]$. We note that the \texttt{Clarabel} solver supports both linear and quadratic programs.
\item \textbf{Threshold contract solver:}
To find threshold contracts for problems with low-dimensional outcome and action spaces (such as with MT-bench), we implement a brute-force solver which performs full enumerations of all possible thresholds, as proposed by \cite{saig2023delegated}. We refer to the \ModelName{Full Enumeration Solver} in \citep[Appendix C.2.1]{saig2023delegated} for further implementation details.
\end{itemizecompact}

\end{document}